\title{Lower Bounds for Induced Cycle Detection \\in Distributed Computing} 
\titlerunning{Lower Bounds for Induced Cycle Detection in Distributed Computing} 
\author{Fran\c{c}ois Le Gall}{Graduate School of Mathematics, Nagoya University, Japan}{legall@math.nagoya-u.ac.jp}{}{}
\author{Masayuki Miyamoto}{Graduate School of Mathematics, Nagoya University, Japan}
{masayuki.miyamoto95@gmail.com}{}{}
\authorrunning{F. Le Gall and M. Miyamoto} 
\keywords{Distributed computing, Lower bounds, Subgraph detection} 
\newcommand{\congest}{\ensuremath{\mathsf{CONGEST}}\xspace}
\newcommand{\qcongest}{\ensuremath{\mathsf{QUANTUM~CONGEST}}\xspace}
\newcommand{\clique}{\ensuremath{\mathsf{CONGESTED~CLIQUE}}\xspace}
\newcommand{\local}{\ensuremath{\mathsf{LOCAL}}\xspace}
\begin{document}

\maketitle

\begin{abstract}
The distributed subgraph detection asks, for a fixed graph $H$, whether the $n$-node input graph contains $H$ as a subgraph or not. 
In the standard \congest model of distributed computing, the complexity of clique/cycle detection and listing has received a lot of attention recently. 

In this paper we consider the induced variant of subgraph detection, where the goal is to decide whether the $n$-node input graph contains $H$ as an \emph{induced} subgraph or not. We first show a $\tilde{\Omega}(n)$ lower bound for detecting the existence of an induced $k$-cycle for any $k\geq 4$ in the \congest model. This lower bound is tight for $k=4$, and shows that the induced variant of $k$-cycle detection is much harder than the non-induced version. This lower bound is proved via a reduction from two-party communication complexity. We complement this result by showing that for $5\leq k\leq 7$, this $\tilde{\Omega}(n)$ lower bound cannot be improved via the two-party communication framework.

We then show how to prove stronger lower bounds for larger values of $k$. More precisely, we show that detecting an induced $k$-cycle for any $k\geq 8$ requires $\tilde{\Omega}(n^{2-\Theta{(1/k)}})$ rounds in the \congest model, nearly matching the known upper bound $\tilde{O}(n^{2-\Theta{(1/k)}})$ of the general $k$-node subgraph detection (which also applies to the induced version) by Eden, Fiat, Fischer, Kuhn, and Oshman~[DISC 2019].

Finally, we investigate the case where $H$ is the diamond (the diamond is obtained by adding an edge to a 4-cycle, or equivalently removing an edge from a 4-clique), and show non-trivial upper and lower bounds on the complexity of the induced version of diamond detecting and listing.
\end{abstract}

\section{Introduction}
\subparagraph*{Background.}
The subgraph detection problem asks us to decide if the $n$-node input graph contains a copy of some fixed subgraph $H$ or not. This problem has received a lot of attention in the past 40 years, and has recently been investigated in the setting of distributed computing as well.
There are actually two versions for this problem.
The first version simply requires to decide if the input network contains $H$. The second version cares about induced $H$, and asks to decide if the input network contains a \textit{vertex-induced} copy of $H$. We refer to Figure~\ref{fig:subgraph_detection} for an illustration of the difference between the two versions. In this paper we call the former version ``non-induced $H$ detection'' and the latter version ``induced $H$ detection''.  

\begin{figure}[tbp]
\centering
    \includegraphics[width=0.1\hsize]{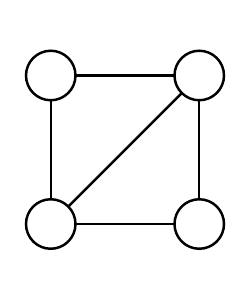}
    \caption{This graph contains a 4-cycle as a subgraph but not an induced 4-cycle.} 
    \label{fig:subgraph_detection}
\end{figure}

When considering subgraph detection in the (synchronous) distributed setting, the communication network is identified with the input graph, i.e., we ask whether the $n$-node communication network contains $H$ as a subgraph (induced or non-induced, depending on the version considered). The complexity is characterized by the number of rounds of (synchronous) communication needed to solve the problem. For networks with unbounded bandwidth  (the so-called \local model in distributed computing), 
both versions of the subgraph detection problem are essentially trivial: for any $O(1)$-node subgraph $H$, the problem can be solved in $O(1)$ rounds by a naive approach. For networks with bounded bandwidth  (the so-called \congest model in distributed computing, in which the size of each message is restricted to $O(\log n)$ bits), on the other hand, the same approach may take many more rounds due to possible congestion in the network (see the next paragraph for the definition of the \congest model). This is one of the reasons why the subgraph detection problem is interesting in the distributed setting. In the last few years, there has been significant progress in understanding the complexity of the non-induced subgraph detection in the \congest model.

\subparagraph*{The \congest model.}
In this paper we use the standard \congest model, as used in prior works~\cite{censor2021tight,censor2019fast,censor2020distributed,chang2019distributed,chang2019improved,czumaj2020detecting,eden2019sublinear,fischer2018possibilities,izumi2017triangle}.
In the \congest model, a distributed network of $n$ computers is represented as a simple undirected graph $G=(V,E)$ of $n$ nodes, where each node corresponds to a computational device, and each edge corresponds to a communication link.
Each node $v\in V$ initially has a $\Theta(\log{n})$-bit unique identifier $\mathrm{ID}(v)$, and knows the list of IDs of its neighbors and the parameter $n=|V|$.
The communication proceeds in synchronous rounds. In each round, each $v\in V$ can perform unlimited local computation, and can send an $O(\log{n})$-bit distinct message to each of its neighbors.

When considering subgraph detection in the \congest model, the communication network is identified with the input graph, i.e., we ask whether the $n$-node communication network contains $H$ as a subgraph. If the network contains $H$ as a subgraph, at least one node outputs~1 (Yes), otherwise all nodes output 0 (No). We assume that each node knows the graph $H$ to be detected. The complexity is characterized by the number of rounds of communication needed to solve the problem.

\subparagraph*{Non-induced subgraph detection in the distributed setting.} 
Typical examples of the non-induced subgraph detection in the \congest model that have been studied intensively  are cliques and cycles. For cliques, the first sublinear-round algorithm of $k$-clique detection in the \congest model is due to Izumi and Le Gall~\cite{izumi2017triangle}, for $k=3$ (i.e., triangle detection), which runs in $\tilde{O}(n^{2/3})$ rounds. 
Later, the complexity was brought down to $\tilde{O}(\sqrt{n})$ by Chang et al.~\cite{chang2019distributed}, and then further to $\tilde{O}(n^{1/3})$ by Chang and Saranurak~\cite{chang2019improved}. These upper bounds also hold for 3-clique listing,\footnote{The listing version of the problem asks to list all instances of $H$ in the graph.} and the $\tilde{O}(n^{1/3})$ upper bound is tight up to polylogarithmic factors due to the lower bounds by Pandurangan, Robinson and Scquizzato~\cite{pandurangan2018distributed} and Izumi and Le Gall~\cite{izumi2017triangle}. For $k$-cliques with $k\geq 4$, the first sublinear algorithm of $k$-clique listing is due to Eden et al.~\cite{eden2019sublinear}. They showed that one can list all $k$-cliques in $\tilde{O}(n^{5/6})$ rounds for $k=4$ and $\tilde{O}(n^{21/22})$ rounds for $k=5$.
These results were improved to $\tilde{O}(n^{k/(k+2)})$ rounds for all $k\geq 4$ by Censor-Hillel, Le Gall, and Leitersdorf~\cite{censor2020distributed}, and very recently, $\tilde{O}(n^{1-2/k})$ rounds for all $k\geq 4$ by Censor-Hillel et al.~\cite{censor2021tight}. The latter bound is tight up to polylogarithmic factors due to the lower bounds by~\cite{fischer2018possibilities}.

For $k$-cycles with $k\geq 4$, it is known that non-induced $k$-cycle ($C_k$) detection requires $\Omega(ex(n,C_k)/n)$ rounds by Drucker et al.~\cite{drucker2014power}, where $ex(n,C_k)$ is the Tur\'{a}n number of $k$-cycle (Tur\'{a}n number $ex(n,C_k)$ is the maximum number of edges in an $n$-node graph which does not have a $k$-cycle as a subgraph). This implies the $\tilde{\Omega}(n)$ lower bounds for odd $k$ and $\tilde{\Omega}(\sqrt{n})$ lower bound for $k=4$. Korhonen and Rybicki~\cite{Korhonen2017deterministic} showed $\tilde{O}(n)$-round algorithms of non-induced $k$-cycle detection for any odd constant $k$. They also showed the $\tilde{\Omega}(\sqrt{n})$ lower bounds for even $k\geq 6$. For $k=4$, an optimal algorithm for non-induced $4$-cycle detection is known due to drucker et al.~\cite{drucker2014power}. For even $k \geq 6$, Fischer et al.~\cite{fischer2018possibilities} showed an $\tilde{O}(n^{1-\frac{1}{k(k-1)}})$-round algorithm, and this was improved to $\tilde{O}(n^{1-2/\Theta(k^2)})$ by Eden et al.~\cite{eden2019sublinear}. Recently, Censor-Hillel et al.~\cite{censor2020fast} showed that for $3\leq k \leq 5$, non-induced $C_{2k}$ detection can be solved in $\tilde{O}(n^{1-1/k})$ rounds. 

We refer to Table~\ref{table:prior work} for the summary of all these results.

\begin{table}[t]\centering
\caption{Prior results for non-induced subgraph detecting and listing in the distributed setting. Here $n$ denotes the number of nodes in the network.}
\scalebox{0.9}[0.9]{ 
  \begin{tabular}{|c|c|c|c|c|} \hline
    \multicolumn{2}{|c|}{Problem} & Time bound & Paper & Model\\ \hline \hline

    \multicolumn{2}{|c|}{\multirow{3}{*}{Triangle detection}}& $\tilde{O}(n^{1/3})$ & \cite{chang2019improved} & {\scriptsize \congest} \\ \cline{3-5}
    
    \multicolumn{2}{|c|}{}& $\tilde{O}(n^{1/4})$ & \cite{izumi2020quantum} & {\scriptsize \qcongest} \\ \cline{3-5}
   
    \multicolumn{2}{|c|}{} & $O(n^{0.159})$ & \cite{censor2019algebraic} & {\scriptsize \clique} \\ \hline
    
    \multicolumn{2}{|c|}{Triangle listing} &$\tilde{\Theta}(n^{1/3})$ & {\scriptsize\cite{izumi2017triangle,chang2019improved,pandurangan2018distributed}}& {\scriptsize \congest} \\ \hline
    
    \multicolumn{2}{|c|}{\multirow{2}{*}{$k$-clique detection, $k\geq 4$}} & $\Omega(n^{1/2}/\log n)$ & \cite{czumaj2020detecting} & {\scriptsize \congest} \\ \cline{3-5}
    
    \multicolumn{2}{|c|}{} & $\tilde{O}(n^{1-2/k})$ & \cite{censor2021tight}& {\scriptsize \congest} \\ \hline
    
    \multicolumn{2}{|c|}{$k$-clique listing, $k\geq 4$} & $\tilde{\Theta}(n^{1-2/k})$ & \cite{fischer2018possibilities,censor2021tight} & {\scriptsize \congest} \\ \hline

    \multirow{5}{*}{$2k$-cycle detection} & $k\geq 2$ & $\Omega(n^{1/2}/\log n)$& \cite{drucker2014power,Korhonen2017deterministic} & {\scriptsize \congest} \\ \cline{2-5}
    
    &$k= 7,9,11,...$ & $\tilde{O}(n^{1-2/(k^2-k + 2)})$ & \cite{eden2019sublinear} & {\scriptsize \congest} \\ \cline{2-5}
  
    &$k= 6,8,10,...$& $\tilde{O}(n^{1-2/(k^2-2k + 4)})$ & \cite{eden2019sublinear} & {\scriptsize \congest} \\ \cline{2-5}
  
    &$2\leq k\leq 5$ & $\tilde{O}(n^{1-1/k})$ & \cite{censor2020fast,drucker2014power} & {\scriptsize \congest} \\ \cline{2-5}
    
    & any constant $k$ & $O(1)$ & \cite{censor2020fast} & {\scriptsize \clique} \\ \hline
    
    \multicolumn{2}{|c|}{$(2k+1)$-cycle detection, $k\geq 2$} & $\tilde{\Theta}(n)$ & \cite{drucker2014power,Korhonen2017deterministic} & {\scriptsize \congest} \\ \hline

    \multicolumn{2}{|c|}{Detecting some $\Theta(k)$-node subgraph $H$}& $\Omega(n^{2-1/k}/\log n)$ & \cite{fischer2018possibilities} & {\scriptsize \congest} \\ \hline
    
  
    \multicolumn{2}{|c|}{Detecting a $k$-node tree} & $O(k^{k})$ & \cite{fraigniaud2017distributed,Korhonen2017deterministic} & {\scriptsize \congest}\\ \hline 
\end{tabular}\label{table:prior work}
}
\end{table}

\subparagraph*{Induced subgraph detection in the distributed setting.}
All of the above results for cycles are only for the non-induced distributed subgraph detection problem. While for the case of cliques, non-induced detection and induced subgraph detection are the same problem, this is not the case for cycles (see again Figure \ref{fig:subgraph_detection} for an illustration). For instance, if we want to know if the input graph contains a chordless cycle, we need to consider the induced version. In the centralized (i.e., non-distributed) setting, the induced version of subgraph detection has thus also been extensively studied \cite{corneil1985linear,eisenbrand2004complexity,itai1978finding,kowaluk2013counting,nevsetvril1985complexity,williams2014finding}, leading to several algorithms that significantly differ  from the algorithms for the non-induced version of the problem.

Despite its importance, the induced version has almost not been studied at all in the distributed setting. The only known results on the complexity of the induced subgraph detection problem in the \congest model are generic bounds describing how large the round complexity can be with respect to the number of nodes in the subgraph: Fischer et al.~\cite{fischer2018possibilities} constructed a family of graphs $H$ with $\Theta(k)$ nodes such that (induced and non-induced) $H$ detection requires $\Omega(n^{2-1/k}/\log{n})$ rounds. Later, Eden et al.~\cite{eden2019sublinear} showed that the $n^{1/k}$ term cannot be removed, and also showed that for any $k$-node subgraph $H$, induced $H$ detection can be solved in $\tilde{O}(n^{2-\frac{2}{3k-2}})=\tilde{O}(n^{2-\Theta(1/k)})$ rounds. These results, however, actually hold for the non-induced version as well. Therefore, to our knowledge, it is still open whether there exists a graph $H$ such that the round complexities of non-induced $H$ detection and induced~$H$ detection are different in the \congest model. 

\subparagraph*{Our results.}
In this paper we answer this question. More precisely, we seek to improve our understanding of the round complexity of the distributed induced subgraph detection in the \congest model by showing lower bounds for constant-length cycles. We refer to Table~\ref{table:our results} for the summary of our results. 

We first show that for any $k\geq 4$, detecting an induced $k$-cycle requires a near-linear amount of rounds; previously, no lower bound for induced cycle detection was known.

\begin{table}[tb]\centering
  \caption{Our results on the round complexity of induced-subgraph detecting, and the corresponding known results. Here $n$ denotes the number of nodes in the network.}
  \begin{tabular}{|c|c|c|c|} \hline
    Problem & Time Bound & Reference & Model \\ \hline \hline
    \!induced $k$-node subgraph detection\!& $\tilde{O}(n^{2-2/(3k+1)})$ & \cite{eden2019sublinear} & \congest\\ \hline
      \multirow{2}{*}{induced $k$-cycle detection}
      & $\Omega(n/\log n)$, for $k\geq 4$ & Theorem~\ref{Theorem:induced_C4} & \congest\\ \cline{2-4}
                                      & $\Omega(n^{2-1/\lfloor k/8 \rfloor}/\log n)$, for $k\geq 8$ & Theorem~\ref{Theorem:induced_Ck} & \congest\\ \hline
    \multirow{2}{*}{induced diamond listing} &$\tilde{O}(n^{5/6})$ & Theorem~\ref{theorem_diamond_upper_bound} &\congest\\ \cline{2-4}
     &$\Omega(\sqrt{n}/\log{n})$ & Theorem 5 &\congest\\ \hline   
  \end{tabular}\label{table:our results}

\end{table}
\begin{theorem}\label{Theorem:induced_C4}
For any $k\geq 4$, deciding if a graph contains an induced $k$-cycle requires $\Omega(n/\log n)$ rounds in the \congest model.
\end{theorem}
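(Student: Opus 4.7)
The plan is to reduce from the two-party Set Disjointness problem $\mathrm{DISJ}_N$, which requires $\Omega(N)$ bits of two-way randomized communication. Specifically, for each $k\geq 4$ and each pair of inputs $X, Y\subseteq [N]$ with $N = \tilde{\Theta}(n)$, I would construct a graph $G_{X,Y}$ on $\Theta(n)$ vertices, partitioned as $V = V_A \sqcup V_B$ with Alice holding the edges incident to $V_A$ (determined by $X$) and Bob holding those incident to $V_B$ (determined by $Y$), such that (i) $G_{X,Y}$ contains an induced $k$-cycle if and only if $X \cap Y \neq \emptyset$, and (ii) the cut between $V_A$ and $V_B$ has only $\tilde{O}(1)$ edges. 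Given such a construction, a $T$-round \congest protocol for induced $k$-cycle detection can be simulated by a two-party communication protocol exchanging $O(cT\log n) = \tilde{O}(T)$ bits, where $c$ is the cut size. Combining with the $\Omega(N)$ lower bound for $\mathrm{DISJ}_N$ then yields $T = \tilde{\Omega}(n)$, as required.

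The construction of $G_{X,Y}$ is the heart of the argument. I would use a small number of ``anchor'' vertices straddling the cut so that each bit index $i\in [N]$ corresponds to a potential induced $k$-cycle passing through the anchors, whose $k$ edges include one edge controlled by Alice and one by Bob (both of which should exist precisely when the corresponding bits of $X$ and $Y$ are $1$). For $k=4$ the cycle has only four vertices, so each candidate uses two cut edges plus one private edge from each side; for $k\geq 5$, I would extend the $k=4$ construction by subdividing one of the edges into a fixed induced path of length $k-3$ built into the frame graph. In both cases, the frame must be engineered so that every induced $k$-cycle in $G_{X,Y}$ corresponds to some $i \in X\cap Y$ via the anchors, and so that chord positions across the cut are unoccupied by cut edges.

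The main obstacle will be preventing ``spurious'' induced $k$-cycles lying entirely on one side of the cut, which would falsely certify a non-empty intersection regardless of the other party's input. For $k=4$ this is especially delicate: a naive encoding of $X$ as an arbitrary dense intra-side subgraph runs into the K\H{o}v\'ari--S\'os--Tur\'an bound, which forces $C_4$-free intra-side graphs carrying only $O(n^{3/2})$ bits of information and thereby yields only the weaker $\tilde{\Omega}(\sqrt n)$ bound. To obtain the full $\tilde{\Omega}(n)$ bound, the intra-side subgraphs must be \emph{inherently} $C_4$-free by design---for instance, restricted to a star-like structure or a bipartite matching to a small hub, both of which are automatically $K_{2,2}$-free---so that the expressive power of the encoding is carried by the cut-spanning gadget rather than by dense intra-side edges. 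For $k\geq 5$ the additional chord conditions on the longer cycle automatically exclude many spurious configurations, so the same strategy extends in a more straightforward way.
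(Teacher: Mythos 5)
Your high-level framework is correct and is indeed the one the paper uses: reduce from two-party set-disjointness by constructing a family of lower-bound graphs $\{G_{X,Y}\}$ and simulating a $T$-round \congest protocol across the cut. But the parameter regime you commit to is not achievable for $k=4$, and the KST-style obstruction you identify is misdiagnosed.

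Concretely, you ask for a \emph{constant-size cut} ($\tilde{O}(1)$ cut edges) while encoding a $\mathrm{DISJ}_N$ instance of size $N=\tilde{\Theta}(n)$. These two requirements conflict. Any induced $4$-cycle that spans the cut must cross it an even number of times, i.e., use exactly two cut edges (or four, but then all its edges are cut edges). Once the pair of cut edges is fixed, the four cycle vertices are fixed, and the candidate $4$-cycle --- together with the Alice-controlled and Bob-controlled edges that complete it --- is fully determined. Hence with $\tilde{O}(1)$ cut edges you have at most $\tilde{O}(1)$ candidate cut-spanning $4$-cycles, so at most $\tilde{O}(1)$ independent bits of $\mathrm{DISJ}$ can be encoded, not $\tilde{\Theta}(n)$. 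The clean way to balance the parameters, which is what the paper does, is to take $\Theta(n)$ cut edges (two perfect matchings $A_1\leftrightarrow B_1$ and $A_2\leftrightarrow B_2$) and let each index $(i,j)\in[n]\times[n]$ of a $\mathrm{DISJ}_{n^2}$ instance select one cut edge from each matching; this yields $\Theta(n^2)$ candidates against a cut of size $\Theta(n)$ and hence $\Omega(n^2/(n\log n))=\Omega(n/\log n)$, the same final bound you were aiming for.

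Your discussion of spurious intra-side cycles and K\H{o}v\'ari--S\'os--Tur\'an also misses the relevant distinction. The intra-side graph does \emph{not} need to be $C_4$-free in the subgraph sense; it only needs to be free of \emph{induced} $4$-cycles. The paper exploits exactly this: $A_1$ is a clique, $A_2$ is an independent set, and the variable Alice edges form an arbitrary bipartite graph between $A_1$ and $A_2$. Any four vertices with two in $A_1$ have a chord (from the clique), any four with $\le 1$ in $A_1$ lack enough edges, and so Alice's side has no induced $C_4$ no matter which of the $n^2$ bipartite edges are present. Thus the encoding capacity is $\Theta(n^2)$, not $O(n^{3/2})$ as KST would dictate for subgraph-$C_4$-freeness. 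Finally, your $k\ge 5$ extension (subdividing an edge into a fixed induced path) is in the right spirit, but to keep both the cut size and the candidate count correct one should subdivide the \emph{cut} edges (the matching edges), as the paper does, so that each subdivided path still contributes exactly one edge to the cut.
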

\noindent For $k=4$, the trivial solution of induced $k$-cycle detection is to have each node send its entire neighborhood to all its neighbors, which can be done in $O(n)$ rounds. Therefore, our bound in Theorem 1 is tight up to logarithmic factor. Since, as already mentioned, the non-induced version of $4$-cycle detection has complexity $\tilde\Theta(\sqrt{n})$, Theorem 1 proves that the induced version is significantly harder in the \congest model.

We then show stronger lower bounds for induced $C_k$-detection for larger values of $k$.

\begin{theorem}\label{Theorem:induced_Ck}
For any constant $k= 8\ell + m$ where $\ell \geq 1$ and $m\in\{0,1,\ldots,7\}$, deciding if a graph contains an induced $k$-cycle requires $\Omega(n^{2-1/\ell}/\log n)$ rounds in the \congest model, even when the diameter of the network is $3$.
\end{theorem}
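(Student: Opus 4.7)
The plan is to reduce from two-party set disjointness, using the standard \congest-to-two-party simulation: an $R$-round \congest algorithm executed on a graph whose vertex set is partitioned into Alice's side and Bob's side, with an edge cut of size $C$ between them, can be simulated by a communication protocol exchanging $O(R\cdot C\cdot \log n)$ bits. Since $\mathrm{DISJ}_N$ requires $\Omega(N)$ bits of communication, to hit the target bound $R=\Omega(n^{2-1/\ell}/\log n)$ it suffices to build a family of graphs on $n$ vertices that encodes a disjointness instance of size $N=\Theta(n^2)$ through a cut of size $C=\Theta(n^{1/\ell})$, such that the graph contains an induced $C_k$ if and only if the inputs intersect.

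I would start from the construction used for Theorem~\ref{Theorem:induced_C4} (the $k=4$ case), which already encodes an $\Omega(n^2)$-bit instance, but only through a cut of size $\Theta(n)$. The central new idea is a \emph{multi-level gadget}: chain $\ell$ copies of a bit-encoding gadget along the prospective cycle, so that completing an induced $k$-cycle requires the walk to pass successively through each of the $\ell$ gadgets. Each gadget contributes $8$ consecutive edges to the cycle (explaining the $8\ell$ term and the requirement $k\geq 8\ell$), while the remaining $m\in\{0,\dots,7\}$ edges are absorbed into a fixed padding path so that the total length is exactly $k$. If each gadget individually encodes a disjointness sub-instance of size $\Theta(n^{2/\ell})$ across $\Theta(n^{1/\ell})$ portal edges between Alice's and Bob's sides, then by a product construction the conjunction of all gadgets encodes $\mathrm{DISJ}$ on $\Theta(n^2)$ bits with a total cut of $\Theta(n^{1/\ell})$, giving exactly the claimed lower bound. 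The diameter-3 requirement would be secured by attaching a constant number of auxiliary high-degree vertices (e.g. vertices adjacent to every node on one side) placed so as to provide short paths between any two nodes without entering the cycle-relevant part of the gadget.

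The main obstacle is the \emph{induced} requirement. Unlike the non-induced setting, every pair of non-consecutive vertices of a putative $C_k$ must be a non-edge, which severely restricts the auxiliary structure one may attach for connectivity, diameter, and gluing of consecutive gadgets. The bulk of the proof will therefore be a careful case analysis establishing two things: (i) a completeness direction showing that whenever the inputs intersect in some coordinate, the gadget chain closes into an induced $k$-cycle with no chords; and (ii) a soundness direction showing that when the inputs are disjoint no induced $k$-cycle exists. For (ii) it is necessary to rule out not only cycles that encode a valid coordinate, but also shortcut cycles that skip a gadget, cycles that re-enter the same gadget twice, and cycles that exploit the auxiliary diameter-3 vertices. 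The number~$8$ per gadget should be essentially tight for this case analysis: it is the smallest length that provides enough internal distance to rule out chords between a gadget and its neighbors along the chain while still allowing each gadget to carry enough bits of input.

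Finally, once the graph and reduction are in place, the lower bound follows immediately by combining the standard simulation lemma with the $\Omega(n^2)$ communication lower bound for $\mathrm{DISJ}$. The diameter-3 claim is obtained as a by-product of the construction, since the auxiliary vertices added for diameter control are part of the fixed (input-independent) skeleton and thus do not affect the reduction.
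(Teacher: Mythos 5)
Your high-level strategy matches the paper's: reduce from $\mathrm{DISJ}_{\Theta(n^2)}$ via a family of lower bound graphs with cut size $\Theta(n^{1/\ell})$, so that Theorem~\ref{theo:Family_of_Lower_Bound_Graphs} gives the claimed bound, and obtain diameter~3 by attaching one apex vertex on each side (the paper adds $c_A$ adjacent to all of $V_A$, $c_B$ adjacent to all of $V_B$, and the edge $(c_A,c_B)$). The completeness/soundness case analysis you anticipate is indeed where most of the work lies.

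However, there is a genuine gap in the central step. You claim that chaining $\ell$ gadgets, each of which "individually encodes a disjointness sub-instance of size $\Theta(n^{2/\ell})$ across $\Theta(n^{1/\ell})$ portal edges," and then taking the conjunction via a product construction, yields $\mathrm{DISJ}$ on $\Theta(n^2)$ bits. As stated this does not follow: a conjunction (or any boolean combination) of $\ell$ independent disjointness instances of size $n^{2/\ell}$ has communication complexity only $\tilde{O}(\ell\, n^{2/\ell})$, not $\Omega(n^2)$, so the resulting lower bound would degrade to roughly $\tilde\Omega(n^{2/\ell - 1/\ell})$. To actually hit $n^2$, all $\Theta(n^2)$ input bits must belong to a single coherent disjointness instance, and the cut must be used only to \emph{verify consistency} of a single chosen index across the two sides, not to carry per-gadget input bits. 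This is exactly the idea you are missing: the paper's construction puts every input bit $x_{ij}$ (resp.\ $y_{ij}$) on edges entirely inside Alice's (resp.\ Bob's) side, between $A_1^i$ and $A_2^j$ (resp.\ $B_1^i$ and $B_2^j$), and introduces separate small "code" sets $U_A,L_A,U_B,L_B$ of size $\Theta(\ell n^{1/\ell})$ that carry the cut. Each index $i\in[n]$ is encoded via a fixed injection $\sigma:[n]\to[\ell n^{1/\ell}]^{\ell}$ as an $\ell$-subset $\mathrm{Code}(A_1^i)$ of the code vertices, and the group $A_1^i$ consists of $\ell$ vertices each wired to one code vertex of $\sigma(i)$. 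An induced $8\ell$-cycle is then forced (by Lemmas~\ref{C8l_lemma} and~\ref{C8l_lemma2}) to take exactly $\ell$ vertices from each of the eight column sets, and these $\ell$ code vertices must be precisely $\mathrm{Code}(A_1^i)$ for a single $i$, which ties Alice's chosen $(i,j)$ to Bob's $(s,t)$ through the $\Theta(n^{1/\ell})$ cut edges. Relatedly, the geometry differs from your sketch: the cycle does not thread a linear chain of $\ell$ disjoint 8-edge gadgets, but rather winds $\ell$ times around a single octagonal structure, taking one vertex per column per lap. Without the injection/coding idea, you cannot push an $n^2$-bit instance through an $n^{1/\ell}$-size cut, so the construction has to be reworked around this point.
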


\noindent
These bounds are asymptotically tight with respect to $k$, since for any $k$-node subgraph $H$, induced $H$ detection can be solved in $\tilde{O}(n^{2-\Theta(1/k)})$ rounds by the algorithm of~\cite{eden2019sublinear}. We can summarize this as follows.

\begin{corollary}
The round complexity of induced $k$-cycle detection in the \congest model is $\tilde \Theta(n^{2-\Theta(1/k)})$.
\end{corollary}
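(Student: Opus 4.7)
The corollary is a direct composition of two results already in play, so the plan is essentially to verify that the exponents match asymptotically in $k$.

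First, the upper bound: the plan is to invoke the algorithm of Eden, Fiat, Fischer, Kuhn, and Oshman~\cite{eden2019sublinear}, which solves induced $H$-detection for any $k$-node subgraph $H$ in $\tilde O(n^{2-2/(3k-2)})$ rounds. Specializing to $H = C_k$, and observing that $2/(3k-2) = \Theta(1/k)$, gives induced $k$-cycle detection in $\tilde O(n^{2-\Theta(1/k)})$ rounds. This step is purely citation.

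Second, the lower bound: the plan is to apply Theorem~\ref{Theorem:induced_Ck}. Given $k \geq 8$, write $k = 8\ell + m$ with $\ell = \lfloor k/8 \rfloor \geq 1$ and $m \in \{0,1,\ldots,7\}$, so that $\ell = \Theta(k)$ and hence $1/\ell = \Theta(1/k)$. Theorem~\ref{Theorem:induced_Ck} then yields a lower bound of $\Omega(n^{2-1/\ell}/\log n) = \tilde\Omega(n^{2-\Theta(1/k)})$ in the \congest model. For the finitely many small values $k \in \{4,5,6,7\}$ the claim degenerates to $\tilde\Theta(n^{2-\Theta(1/k)}) = \tilde\Theta(n)$ up to constants in the exponent, which is consistent with Theorem~\ref{Theorem:induced_C4} and the trivial $O(n)$ neighborhood-broadcast upper bound, so these cases do not affect the asymptotic statement.

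Combining the two matching bounds yields the $\tilde\Theta(n^{2-\Theta(1/k)})$ characterization. There is no real obstacle here beyond bookkeeping the $\Theta(\cdot)$ hidden in the exponent: one needs to note that the upper-bound exponent $2/(3k-2)$ and the lower-bound exponent $1/\lfloor k/8\rfloor$ are both of order $1/k$, so they agree only at the level of the $\Theta(1/k)$ in the exponent, not at the level of exact constants. This is precisely the sense in which the corollary is stated.
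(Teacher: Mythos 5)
Your proposal is correct and is exactly the argument the paper intends for this corollary: combine the $\tilde O(n^{2-2/(3k-2)})=\tilde O(n^{2-\Theta(1/k)})$ upper bound of Eden et al.\ with the $\tilde\Omega(n^{2-1/\lfloor k/8\rfloor})=\tilde\Omega(n^{2-\Theta(1/k)})$ lower bound of Theorem~\ref{Theorem:induced_Ck}, noting both exponents are $\Theta(1/k)$. You correctly flag that the matching is only at the level of the $\Theta(\cdot)$ in the exponent and that the small cases $k\le 7$ are subsumed asymptotically, which is the intended reading.
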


For small $k$, there still exist gaps between our lower bounds and known upper bounds. For instance, we do not know if induced $5$-cycle detection can be solved in $\tilde{O}(n)$ rounds. This leads to the following question: can we show any improved lower bounds in the case of $k\geq 5$? We complement our results by showing that reductions from two-party communication complexity, which is the technique we used to show our lower bounds (as well as most of the other lower bounds in the literature), 
do not have the ability to derive better lower bounds for the case of $k\leq 7$. 
\begin{theorem}[Informal statement]
For $k=5,6,7$ and any $\varepsilon > 0$, reductions from two-party communication complexity cannot give an $\tilde{\Omega}(n^{1+\varepsilon})$ lower bound for induced $C_k$ detection in the \congest model.
\end{theorem}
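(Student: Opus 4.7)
The plan is to establish the claim by exhibiting, for each $k \in \{5,6,7\}$, an efficient two-party communication protocol for the partitioned version of induced $C_k$ detection that sits underneath the framework. Concretely, fix a vertex partition $V = V_A \sqcup V_B$, let $E_{AB}$ denote the cut (known to both players), let Alice hold the edges inside $V_A$ and Bob the edges inside $V_B$, and ask them to decide whether the full graph contains an induced $C_k$. In the framework, any $R$-round \congest algorithm yields, on \emph{every} such partition, a protocol of communication $O(R \cdot |E_{AB}| \log n)$; hence, a protocol of complexity $\tilde O(n \cdot |E_{AB}|)$ on every partition forces every CONGEST lower bound obtained through the framework to satisfy $R = \tilde O(n)$, which is exactly what rules out $\tilde\Omega(n^{1+\varepsilon})$ lower bounds for every $\varepsilon>0$.

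To construct such a protocol, I would first dispose of induced $k$-cycles contained entirely in $V_A$ or entirely in $V_B$, which each player detects privately with zero communication. The remaining task is to detect induced $k$-cycles crossing the cut. Since the cycle has $k \leq 7$ edges and the number of cut edges it uses must be even, only $O(1)$ ``cross-types'' arise -- combinatorial templates specifying, up to the choice of concrete vertices, how the cycle interleaves $V_A$-vertices, $V_B$-vertices and cut edges. I would design a subprotocol for each type and sum the communications.

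For a fixed cross-type the $V_A$-contribution is an induced path on at most $\lfloor k/2\rfloor + 1 \leq 4$ vertices whose endpoints lie in $V_A^* := \{v \in V_A : v \text{ is incident to the cut}\}$, and likewise for $V_B^*$. The core combinatorial inequality I plan to exploit is $|V_A^*|^2 \leq n \cdot |E_{AB}|$, valid because $|V_A^*| \leq \min(n, |E_{AB}|)$, so a table indexed by pairs in $V_A^* \times V_A^*$ carrying $\mathrm{polylog}(n)$ bits per entry fits the desired budget. Alice would transmit such a table recording whether her private graph contributes a valid witnessing path between the two putative endpoints, and Bob would then read the table against each candidate $V_B$-side completion, for each of the $O(1)$ cross-types.

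The main obstacle -- and the essential technical reason induced cycle detection is trickier than its non-induced counterpart -- is the chord condition affecting internal vertices of the $V_A$-side witnessing path whose cross-edges may hit the $V_B$-vertices of the cycle. The ``validity'' of a witnessing path therefore depends on the (at most constant-size) set of $V_B$-vertices chosen by Bob. I plan to address this by augmenting each table entry with a constant-size ``chord profile'' recording the incidence pattern of the internal vertices on the bounded collection of $V_B$-slots prescribed by the cross-type. A subtler point will be ensuring that when many internal witnesses exist they can still be summarized compactly, for instance via a counting argument showing that once the number of admissible internal vertices exceeds the total cross-degree of the candidate $V_B$-slots some witness must avoid any short forbidden list; this should bring the per-entry size to $\mathrm{polylog}(n)$ and so the total communication to $\tilde O(n \cdot |E_{AB}|)$, completing the argument.
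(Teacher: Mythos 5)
Your framework-level reasoning is correct and matches the paper's: exhibit a two-party protocol for induced $C_k$ detection under a vertex partition using $\tilde O(n\,|E_{cut}|)$ bits, then observe that any family-of-lower-bound-graphs argument gives at most $\Omega(CC^R(f)/(|E_{cut}|\log n)) \le \tilde O(n)$. The problem is that the protocol you sketch is both overcomplicated and incomplete, and two of its key ingredients do not currently hold up. First, the ``$V_A$-contribution is a single induced path'' assumption is false: an induced crossing $C_k$ can interleave the two sides into several $V_A$-arcs and several $V_B$-arcs (e.g.\ for $k=7$ with a 4-3 split you can have one, two, or three arcs per side), so a table indexed by one endpoint pair does not capture all cross-types, and compatibility constraints between distinct $V_A$-arcs (they must be mutually non-adjacent) are not addressed. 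Second, the ``chord profile'' is not constant-size: the internal $V_A$-vertices must avoid edges to whichever $V_B$-vertices Bob picks, and since a cross-type specifies only \emph{how many} $V_B$-slots there are, not \emph{which} vertices they are, a faithful profile would need $\Theta(n)$ bits per internal vertex. Your proposed fix --- a counting argument that a witness avoiding any short forbidden list exists once there are enough candidate internal vertices --- is only gestured at, and it is not clear it closes the gap when the number of candidate internal paths is moderate or when several internal vertices must simultaneously avoid several $V_B$-slots. As written, the core technical step is ``should bring the per-entry size to polylog'', which is an aspiration rather than a proof.

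For contrast, the paper's protocol is far more direct and entirely avoids these issues. Each player simply sends all of their internal edges that are incident to at least one cut-incident vertex: Alice sends $E_A \cap (V'_A \times V_A)$ and Bob sends $E_B \cap (V'_B \times V_B)$, where $V'_A, V'_B$ are the cut-incident vertices; each message is at most $n\,|E_{cut}|$ edges, so $\tilde O(n\,|E_{cut}|)$ bits. Then for $k\le 7$, any crossing cycle has at most three vertices on its minority side, and since the cycle must alternate between sides, at most one of those three fails to touch the cut; hence every pair of minority-side vertices has at least one member in $V'_B$ (resp. $V'_A$), and the majority-side player already knows the full edge/non-edge status of every pair in the candidate vertex set. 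No tables, chord profiles, or counting arguments are needed. You would do well to abandon the table scheme and adopt this shorter route; note in particular that the exact bound $k\le 7$ enters only through ``at most three minority-side vertices, hence at most one hidden vertex'', which is precisely where the argument stops working for $k\ge 8$.
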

\noindent Theorem 4 is shown via an argument similar to the arguments in~\cite{czumaj2020detecting,eden2019sublinear}. These papers showed that reductions from two-party communication complexity (more precisely, the \textit{family of lower bound graphs} technique) cannot show $\tilde{\Omega}(n^{1/2+\varepsilon})$ lower bounds of $4$-clique detection and (non-induced) $6$-cycle detection. As mentioned above, to date, all known lower bounds on the subgraph detection in the \congest model used reductions from two-party communication complexity. Therefore, Theorem 4 shows that we need a fundamentally different approach to improve these lower bounds.



The graph that is obtained by removing one edge from a 4-clique is called a \textit{diamond}. Diamonds are interesting since it is in some sense intermediate between 4-cycle and 4-clique. We show a lower bound of induced diamond listing in the CONGEST model.
\begin{theorem}\label{theorem_diamond}
Listing all induced diamonds requires $\Omega(\sqrt{n}/\log n)$ rounds in the \congest model.
\end{theorem}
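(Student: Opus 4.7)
The plan is to prove this bound by reduction from two-party set disjointness (DISJ) via the standard family-of-lower-bound-graphs framework, in the spirit of prior lower bounds for subgraph detection in the \congest model. Since listing all induced diamonds also detects whether an induced diamond exists, it is enough to prove the $\Omega(\sqrt n/\log n)$ lower bound for induced diamond detection.

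Concretely, I would fix $m = \Theta(\sqrt n)$, reduce from DISJ on $\Theta(n)$-bit inputs (whose randomized communication complexity is $\Omega(n)$), and construct a family $\{G_{x,y}\}$ of $\Theta(n)$-vertex graphs with vertex set partitioned as $V = V_A \sqcup V_B$ satisfying: (i) edges inside $V_A$ depend only on Alice's input $x$; (ii) edges inside $V_B$ depend only on Bob's input $y$; (iii) the cross edges are input-independent and of cardinality $O(\sqrt n)$; and (iv) $G_{x,y}$ contains an induced diamond iff $\mathrm{DISJ}(x,y) = 0$. The standard simulation then yields the bound: a $T$-round \congest algorithm induces an $O(T \cdot \sqrt n \cdot \log n)$-bit two-party protocol for DISJ, forcing $T = \Omega(\sqrt n/\log n)$.

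The skeleton of the construction is a pre-committed ``diagonal'' cross edge $uv$ (with $u \in V_A$, $v \in V_B$) together with two families of $m$ wing-candidate vertices $\{a_i\} \subset V_A$ and $\{b_j\} \subset V_B$, each made into a common neighbor of $\{u,v\}$ by a pre-committed $O(\sqrt n)$-size collection of cross edges $\{v a_i\}_i$ and $\{u b_j\}_j$. Alice's internal edges $u a_i$ and Bob's internal edges $v b_j$ switch on the ``active'' wing candidates according to the bits of $x$ and $y$. An induced diamond rooted at $uv$ then corresponds to an active pair $(a_i, b_j)$ that is moreover non-adjacent across the cut. The interior of each side is arranged so that no induced diamond can live entirely on one side --- typically by turning each wing-candidate set into a clique together with padding vertices --- forcing the only possible induced diamonds to straddle the cut and be anchored at $uv$.

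The main obstacle is obtaining the right adjacency pattern between wing pairs $(a_i,b_j)$ under the $O(\sqrt n)$ cross-edge budget: a direct per-$(i,j)$ encoding would require $\Theta(n)$ cross edges, far too many. Bypassing this calls for a bit-gadget construction on the wing vertices, similar in spirit to those used in~\cite{drucker2014power,fischer2018possibilities}, that compresses the pairwise adjacency information into $\tilde O(\sqrt n)$ cross edges while preserving the intended equivalence with DISJ. Because the diamond is the \emph{induced} variant, an extra layer of care is required over the non-induced $C_4$ setting: the critical non-edge of the diamond must depend jointly on Alice's and Bob's inputs without introducing any additional edges inside the relevant four-vertex set, and one must check that no spurious induced diamonds are created within the bit-gadget region.
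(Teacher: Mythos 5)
Your high-level plan (reduce from set-disjointness, simulate a $T$-round \congest algorithm as a two-party protocol, divide communication complexity by cut size) is the right framework and the arithmetic $T\cdot|E_{cut}|\cdot\log n=\Omega(CC^R(\mathrm{DISJ}))$ is correct, but the actual construction is both incomplete and aimed at a target the paper deliberately avoids, and the missing pieces are exactly where the difficulty lies.

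First, you reduce to induced diamond \emph{detection}, on the grounds that detection is easier than listing. This is logically sound, but it makes your task strictly harder than the paper's, and the paper's construction tells you why: in the graphs the paper builds, there \emph{are} spurious induced diamonds that live entirely on Alice's side or entirely on Bob's side (for example, a $(3,1)$-diamond $\{a_1,a_2,a_3,b\}$ appears whenever $a_1a_2,a_2a_3$ are $x$-edges, $b$ is a common cross-neighbor, and $a_1a_3$ is absent). This is precisely why the paper only proves a \emph{listing} bound and why its Lemma on the construction is phrased in terms of ``$(2,2)$-diamonds'' rather than all induced diamonds: Bob reports one bit saying whether any of his listed diamonds is a $(2,2)$-diamond, and $\mathrm{DISJ}$ is recovered from that signal. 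A detection lower bound via the family-of-lower-bound-graphs framework would need a graph family where the \emph{only} induced diamonds are the disjointness-encoding ones, and your suggested fix --- ``turning each wing-candidate set into a clique together with padding'' --- does not achieve this: a clique on $\{a_1,\dots,a_m\}$ together with an anchor $u$ adjacent to $a_i,a_j$ but not $a_k$ already contains the induced diamond $\{u,a_i,a_j,a_k\}$.

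Second, the encoding step you leave to a ``bit-gadget construction'' is the heart of the matter, not a detail. With a single anchor edge $uv$ and $m=\Theta(\sqrt n)$ wing candidates per side, the quadruple $\{u,v,a_i,b_j\}$ is a diamond only if the cross pair $a_ib_j$ is a non-edge, and you must pre-commit which $(i,j)$ pairs are non-adjacent while keeping $|E_{cut}|=O(\sqrt n)$; a direct anti-matching needs $\Theta(n)$ cross edges. Compressing this with a gadget while simultaneously guaranteeing that the gadget vertices themselves create no induced diamonds is nontrivial, and you do not give it. For contrast, the paper sidesteps all of this with a different regime: $\Theta(n^{3/2})$ cross edges (not $O(\sqrt n)$), random bijections between $\sqrt n$-sized blocks of $A$ and $B$, a probabilistic lemma showing $\Omega(n^2)$ ``good pairs'' in $A\times A$ sharing exactly one common neighbor, and $\mathrm{DISJ}$ on $\Theta(n^2)$ bits indexed by these good pairs. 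The ratio $n^2/(n^{3/2}\log n)$ gives the same $\Omega(\sqrt n/\log n)$ bound, but the construction is completely different from yours and never requires a cut as small as $O(\sqrt n)$.
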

We also prove the same result as in Theorem 4 for induced diamond listing.
\begin{theorem}[Informal statement]\label{theorem_diamond_limitation}
For any $\varepsilon > 0$, reductions from two-party communication complexity cannot give an $\tilde{\Omega}(n^{1/2+\varepsilon})$ lower bound for induced diamond listing in the \congest model.
\end{theorem}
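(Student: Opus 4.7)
The plan is to adapt to induced diamond listing the ``family of lower bound graphs'' limitation argument developed by Czumaj-Dobrza\'nski for $K_4$ detection and by Eden et al.\ for non-induced $6$-cycle detection, and analogous to the proof of the informal Theorem~4 above. Suppose we have a family $\{G_{x,y}\}_{x,y}$ of lower bound graphs for induced diamond listing with vertex partition $V = V_A \cup V_B$ and cut edge set $E_C$ of size $c$, such that the listing output of any algorithm on $G_{x,y}$ determines a two-party Boolean function $f(x,y)$. The framework yields an $\Omega(CC(f)/(c \log n))$-round lower bound, so it is enough to exhibit a two-party protocol for $f$ of cost $\tilde{O}(c\sqrt{n})$; this would rule out any $\tilde{\Omega}(n^{1/2+\varepsilon})$ improvement over Theorem~\ref{theorem_diamond}.

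The first step is the structural observation that every induced diamond of $G_{x,y}$ not entirely contained in $V_A$ or in $V_B$ must use at least two cut edges, and moreover these two cut edges can always be chosen so as to share a common endpoint. Indeed, any $(3,1)$- or $(2,2)$-partition of the four diamond vertices leaves at least two of the five diamond edges crossing, and a brief case check on the constantly many such bipartitions shows that one may always pick two crossing edges incident to the same vertex. Hence every crossing induced diamond can be anchored on such a pair, after which only $O(1)$ further diamond vertices remain to be identified.

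The second step is the counting: after Alice and Bob exchange the $\tilde{O}(c\sqrt{n})$ bits needed to reveal in full the adjacency of every cut-incident vertex of cut-degree larger than $\sqrt{n}$ (there are at most $2c/\sqrt{n}$ such vertices, each handled with $O(n\log n)$ bits), the remaining anchors lie among cut-incident vertices of cut-degree at most $\sqrt{n}$, giving $\sum_v \binom{d_C(v)}{2} \le (\sqrt{n}/2)\cdot \sum_v d_C(v) \le c\sqrt{n}$ anchor pairs. For each such anchor pair, Alice sends $O(\log n)$ bits to certify the adjacency and non-adjacency information on her side that is needed to complete an induced diamond, and Bob does the same. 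After this exchange both parties know the complete induced-diamond list of $G_{x,y}$ and therefore $f(x,y)$. The total cost is $\tilde{O}(c\sqrt{n})$, as desired.

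The main obstacle is the induced constraint: unlike in the clique and cycle limitation arguments, Alice and Bob must also certify a prescribed \emph{non-edge} between the two degree-$2$ vertices of each candidate crossing diamond. When this non-edge lies inside one side of the partition it is handled locally, but when its endpoints straddle the cut its status has to be included in the $O(\log n)$ bits sent per anchor. The required case analysis uses only the facts that the diamond has $4$ vertices and finitely many cut-splits, and once it is in place the bucketization argument of Eden et al.\ carries through verbatim to finish the bit count.
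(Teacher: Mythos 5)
Your proposal is correct, and it establishes the same key intermediate result as the paper --- a two-party protocol of cost $\tilde{O}(\sqrt{n}\,|E_{cut}|)$ for listing induced diamonds in the vertex partition model, from which the limitation statement follows by the standard argument --- but via a genuinely different route. The paper's Appendix~C protocol partitions Alice's cut-incident vertices by the \emph{ratio} $\deg_{V_B}(v)$ versus $\deg_{V_A}(v)/\sqrt{n}$ and distinguishes three sub-cases of a $(2,2)$-split diamond, one of which invokes the $K_4$-listing protocol of Czumaj and Konrad as a black box. You instead use an \emph{absolute} cut-degree threshold $\sqrt{n}$ together with the observation that every crossing induced diamond has a vertex incident to two of its cut edges (an ``anchor''), and then bound the number of light-vertex anchor pairs by $\sum_v \binom{d_C(v)}{2} \le \sqrt{n}\,|E_{cut}|$ in place of the paper's weighted degree sums. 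Your route is more self-contained, as it avoids the $K_4$-listing subroutine and handles the $(3,1)$- and $(2,2)$-splits with one uniform structural fact (for $(3,1)$ the anchor is the lone minority vertex, and that side is anyway listed locally, as you should state explicitly). One sentence in your write-up is garbled: ``Alice sends $O(\log n)$ bits to certify the adjacency and non-adjacency information on her side'' is not quite what happens --- for a light anchor $v$ on Alice's side with anchor pair $(u_1,u_2)$, the correct exchange is that Alice names $(u_1,u_2)$ in $O(\log n)$ bits and Bob replies with the one missing piece of information, the status of the pair $(u_1,u_2)$ inside $V_B$, after which Alice can enumerate the fourth vertex locally --- but the bit count is unchanged, so this does not affect the conclusion.
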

Finally, we show that induced diamond listing can be done in sublinear rounds. 
\begin{theorem}\label{theorem_diamond_upper_bound}
There exists an algorithm that solves induced diamond listing in $\tilde{O}(n^{5/6})$ rounds in the \congest model.
\end{theorem}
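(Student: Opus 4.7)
My plan is to adapt the $\tilde O(n^{5/6})$-round 4-clique listing algorithm of Eden, Fiat, Fischer, Kuhn and Oshman~\cite{eden2019sublinear} to the induced-diamond case, exploiting the structural observation that every induced diamond on a vertex set $\{a,b,c,d\}$ (with diagonal $ac$ and missing edge $bd$) decomposes into two triangles $\{a,b,c\}$ and $\{a,c,d\}$ sharing the diagonal edge, where the two ``free'' vertices $b$ and $d$ are non-adjacent. Listing induced diamonds therefore reduces to (i) listing all triangles of the input graph, and then (ii) for each listed triangle and each of the three choices of an internal ``diagonal'' edge, listing all vertices that extend the triangle to an induced diamond, i.e., vertices adjacent to both endpoints of the candidate diagonal but \emph{not} adjacent to the third vertex of the triangle.

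For step (i) I would use the Chang--Saranurak triangle-listing algorithm, which runs in $\tilde O(n^{1/3})$ rounds in the \congest model; by a canonical rule (e.g.\ lex-smallest incident ID) each triangle is stored at exactly one of its three vertices. For step (ii) I would reuse the routing scheme underlying the 4-clique-listing algorithm of~\cite{eden2019sublinear}, whose complexity is governed by an edge-based heavy/light decomposition that schedules the dissemination of triangle records to the set of candidate extending vertices. The only modification needed at the receiver is to verify two adjacencies and one non-adjacency instead of three adjacencies; since this verification uses only the receiver's own neighborhood, it is a purely local decision that adds no communication. Each induced diamond is discovered exactly twice (once per triangle it contains that shares its diagonal), so a tie-breaking rule based on the IDs of the diagonal endpoints outputs each diamond exactly once.

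The main obstacle will be confirming that the balanced heavy/light routing argument driving the $\tilde O(n^{5/6})$ bound in~\cite{eden2019sublinear} still goes through once the extension predicate mixes adjacency and non-adjacency. Non-adjacency is implicit, so one must ensure that every candidate extending vertex actually receives the triangle record needed to perform the non-adjacency test, rather than implicitly relying on ``some incident edge carrying the positive evidence'' as in the 4-clique case. I expect this check to be straightforward because the scheme of~\cite{eden2019sublinear} already delivers each triangle record to every candidate fourth vertex, and once the record is in the hands of the candidate, both the adjacency tests and the non-adjacency test are local. Modulo this verification, the round complexity of the adapted algorithm remains $\tilde O(n^{5/6})$, as claimed.
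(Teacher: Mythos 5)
Your plan is a genuinely different route from the paper's, and unfortunately it has a gap that you have noticed but then dismissed too quickly. The paper's proof (Appendix~\ref{appD}) does not list triangles at all: it runs the expander decomposition of~\cite{eden2019sublinear} ($E=E_m\cup E_s$), then (a) handles $E_s$-only diamonds by local flooding of sparse edges, (b) for diamonds touching a cluster $C$ that involve a $C$-heavy node, gathers outside edges into $C$ and simulates a \clique-style algorithm via Lemma~\ref{lemma_simulation}, and (c) for diamonds with two $C$-light outside nodes, has each light node $u$ broadcast $N(u)\cap C$ to its neighbors and then has the partner node $v$ issue $O(n^{\varepsilon})$ edge queries to $C$. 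Your ``list triangles, then extend'' pipeline is not the shape of that algorithm, and your statement that the routing scheme of~\cite{eden2019sublinear} ``already delivers each triangle record to every candidate fourth vertex'' is not an accurate description of what that scheme does: it delivers \emph{edges} to cluster nodes so that each cluster node becomes responsible for a bounded set of vertex quadruples, not triangle records to candidate extenders.

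The concrete gap is exactly the one you flag and then wave away. A vertex $a$ may lie in $\Theta(n^2)$ triangles, so pushing every triangle record through one of its endpoints to the candidate fourth vertices $N(a)\cap N(c)$ is not obviously routable in $\tilde O(n^{5/6})$ rounds, and the $K_4$-listing scheme does not give you this for free. More importantly, the place where non-adjacency really bites is not at the ``candidate fourth vertex'' but in the sparse/light regime: when the two ``free'' vertices $u,v$ of a diamond are both $C$-light, neither the cluster $C$ nor any single node naturally holds enough information to certify the missing edge $uv$ (or $vc_2$). The paper handles this with the dedicated constructions $\mathcal{L}_v^1$, $\mathcal{L}_v^2$ and the query sets $\mathcal{Q}_{v,c_1}$, where $v$ verifies the non-adjacency locally from its own neighborhood together with the received set $N(u)\cap C$, and then confirms the one remaining positive edge $c_1c_2$ via an $O(n^{\varepsilon})$-round query. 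Your proposal contains no analogue of this mechanism, and without it the correctness claim for the light case is unsupported. Also note the paper's opening observation in Appendix~\ref{appD}: recursive edge-removal arguments that underlie some clique-listing frameworks are unsound for induced patterns, because deleting an edge can create new induced diamonds; any adaptation must steer clear of that, which the paper does by invoking Lemmas~\ref{lemma_expander} and~\ref{lemma_simulation} directly rather than any removal-based recursion.
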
\vspace{2mm}
Due to space constraints, the proofs of Theorem \ref{theorem_diamond_limitation} and Theorem \ref{theorem_diamond_upper_bound} are omitted from the main body of this paper --- they can be found in Appendix \ref{appC} and Appendix~\ref{appD}.

\subparagraph*{Other related works.} Several works investigate the complexity of subgraph detection in other models of distributed computing.
In the powerful \clique model, which allows global communication, the induced subgraph detection (and even listing) can be solved in sublinear rounds~\cite{dolev2012tri}: for any $k$-node subgraph $H$, induced $H$ detection and listing can be solved in $O(n^{1-2/k})$ rounds.
This algorithm was used as a subroutine to construct sublinear-round \congest algorithms for clique detection and listing~\cite{censor2021tight,censor2020distributed,chang2019distributed,chang2019improved,izumi2020quantum,izumi2017triangle}.
In the \clique model, for any constant $k\geq 3$, $k$-cycle can be detected in $O(2^{O(k)}n^{0.158})$ rounds by an algebraic algorithm which uses the matrix multiplication~\cite{censor2019algebraic} and for $k\geq 2$, $2k$-cycle can be detected in O(1) rounds~\cite{censor2020fast}. In the \qcongest model, in which each node represents a quantum computer and each edge represents a quantum channel, Izumi, Le Gall and Magniez~\cite{izumi2020quantum} showed that triangle detection can be solved in $\tilde{O}(n^{1/4})$ rounds by using quantum distributed search~\cite{le2018sublinear} which is the distributed implementation of Grover's quantum search. For any $\varepsilon > 0$, showing a lower bound of $\Omega(n^{\varepsilon})$ on directed triangle detection implies strong circuit complexity
lower bounds~\cite{censor2020fast}. These bounds are included in Table~\ref{table:prior work}.

Other relevant works include constant-round detection of  constant-sized trees in the \congest model~\cite{fraigniaud2017distributed,Korhonen2017deterministic},  
and investigations of the distributed subgraph detection problem in the framework of \textit{property testing}~\cite{censor2019fast,fraigniaud2019distributed,fraigniaud2016distributed}.

\subparagraph*{Recent independent work.}
Lower bounds for induced cycle detection similar to some of the bounds in our paper have been concurrently (and independently) obtained very recently by Korhonen and Nikabadi \cite{korhonen2021distributed}.
They showed the following results using graph constructions different from ours (but still using reductions from two-party communication complexity):
\begin{itemize}
    \item $\Omega(n/\log n)$ lower bound for induced $2k$-cycle detection for $k\geq 3$,
    \item $\Omega(n/\log n)$ lower bound for a multicolored variant of $k$-cycle detection for $k\geq 4$.
\end{itemize}

\section{Preliminaries}

To prove lower bounds, we use reductions from two-party communication complexity problems. This is the common technique to show lower bounds in the \congest model~\cite{fischer2018possibilities,censor2017quadratic,czumaj2020detecting,drucker2014power,abboud2016near}. Here we give the precise definition of the \textit{family of lower bound graphs}, which is the standard notion to show these lower bounds (see, e.g.,~\cite{censor2017quadratic}).
\begin{definition}[Family of Lower Bound Graphs]\label{def:LBGraph}
Given an integer $K$, a boolean function $f:\{0,1\}^K \times \{0,1\}^K\rightarrow \{0,1\}$ and a graph predicate $P$, a set of graphs $\{G_{x,y}=(V,E_{x,y})|x,y\in\{0,1\}^K\}$ is called \textit{a family of lower bound graphs} with respect to $f$ and $P$
 if the following hold:
 \begin{enumerate}
     \item The set of vertices $V$ is the same for all the graphs in the family, and has a fixed partition $V=V_A\cup V_B$. The set of edges of the cut $E_{cut}=E(V_A,V_B)$ is the same for all graphs in the family.
     \item Given $x,y\in\{0,1\}^K$, $E(V_A,V_A)$ only depends on $x$.
     \item Given $x,y\in\{0,1\}^K$, $E(V_B,V_B)$ only depends on $y$.
     \item $G_{x,y}$ satisfies $P$ if and only if $f(x,y) = 1$.
 \end{enumerate}
 \end{definition}
 \begin{theorem}[\cite{censor2017quadratic}]\label{theo:Family_of_Lower_Bound_Graphs}
 Fix a boolean function $f:\{0,1\}^K \times \{0,1\}^K\rightarrow \{0,1\}$ and a graph predicate $P$. If there exists a family of lower bound graphs $\{G_{x,y}\}$ with respect to $f$ and $P$, then any randomized algorithm for deciding $P$ in the \congest model takes $\Omega(CC^{R}(f)/|E_{cut}|\log n)$, where $CC^R(f)$ is the randomized communication complexity of $f$.
 \end{theorem}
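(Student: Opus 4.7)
The plan is to establish the claim via a standard simulation argument that converts a $T$-round \congest algorithm for deciding $P$ into a two-party randomized communication protocol for $f$. Suppose $\mathcal{A}$ is a randomized \congest algorithm that decides $P$ on any $n$-node input graph in $T$ rounds with success probability at least $2/3$. I will exhibit a randomized two-party protocol $\Pi$ computing $f(x,y)$ with constant error and total communication $O(T \cdot |E_{cut}| \cdot \log n)$; Yao's lower bound $CC^R(f)$ on any such protocol then yields $T = \Omega(CC^R(f)/(|E_{cut}|\log n))$ after rearrangement.

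First, given inputs $x,y \in \{0,1\}^K$, Alice and Bob conceptually consider the graph $G_{x,y}$ from the lower bound family. By Definition~\ref{def:LBGraph}, Alice has full knowledge of the subgraph induced by $V_A$ (since $E(V_A,V_A)$ depends only on $x$) and of the cut $E_{cut}$ (which is fixed for the whole family, so is known a priori); symmetrically Bob knows the subgraph induced by $V_B$ and the cut $E_{cut}$. Alice will maintain the local state of every node in $V_A$ throughout the simulation, Bob the local state of every node in $V_B$. The players share a public random string (private randomness can be handled by the standard Newman-style conversion, at an additive $O(\log n)$ cost that does not affect the bound).

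Next, I simulate $\mathcal{A}$ round by round. At round $r$, for any cut edge $\{u,v\}$ with $u \in V_A$ and $v \in V_B$, the $O(\log n)$-bit message that $u$ sends along $\{u,v\}$ is a function of $u$'s current local state, which is entirely maintained by Alice; Alice transmits this message to Bob, who injects it into $v$'s incoming buffer. Bob performs the symmetric action for messages leaving $V_B$. Crucially, messages traveling along intra-$V_A$ edges (respectively intra-$V_B$ edges) require no communication, because Alice (respectively Bob) knows both endpoints. Properties~(2) and~(3) of Definition~\ref{def:LBGraph} are exactly what guarantees that Alice can compute all of her outgoing cut messages without knowing $y$, and likewise for Bob. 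After Alice and Bob have delivered all cut messages for round $r$, each side independently updates the states of its nodes according to $\mathcal{A}$'s transition function. Each round thus costs $O(|E_{cut}| \log n)$ bits.

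After $T$ rounds, the output convention for $\mathcal{A}$ is that some node outputs $1$ iff $P(G_{x,y})$ holds. Alice sends one bit indicating whether any node in $V_A$ output $1$, Bob does the same for $V_B$, and the protocol outputs the OR; by Property~(4) this equals $f(x,y)$ with probability at least $2/3$. The total communication of $\Pi$ is $O(T \cdot |E_{cut}| \cdot \log n)$, so $CC^R(f) = O(T \cdot |E_{cut}| \cdot \log n)$ and the desired bound on $T$ follows. The main conceptual point, and the only place care is required, is verifying that each player can generate her outgoing cut messages from her own private input together with the shared randomness and previously received messages; this is exactly what Properties~(1)--(3) of the family of lower bound graphs are designed to ensure.
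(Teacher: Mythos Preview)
Your proof is correct and follows the standard simulation argument that underlies this result. Note, however, that the paper does not actually prove this theorem: it is quoted from~\cite{censor2017quadratic} and used as a black box. Your argument is precisely the one given in that reference (Alice simulates $V_A$, Bob simulates $V_B$, each round is simulated by exchanging the $2|E_{cut}|$ cut messages of $O(\log n)$ bits each), so there is nothing to compare.
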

 
\noindent Throughout this paper, we use the set-disjointness function $\mathrm{DISJ}_{K}:\{0,1\}^K \times \{0,1\}^K\rightarrow \{0,1\}$. For two bit strings $x,y\in\{0,1\}^K$, $\mathrm{DISJ}_{K}(x,y)$ is equal to $0$ if and only if there exists some index $i\in [K]$ such that $x_i=y_i=1$. It is well known that $CC^R(\mathrm{DISJ}_K)=\Omega(K)$~\cite{razborov1990distributional}.

\section{Lower Bounds for $k$-cycles, $k\geq 4$.}
In this section we prove Theorem~\ref{Theorem:induced_C4}. To prove Theorem~\ref{Theorem:induced_C4}, we describe families of lower bound graphs with respect to the set-disjointness function of the two-party communication complexity, and the predicate $P$ that says the graph does not contain an induced $k$-cycle. We start by describing the fixed graph construction for the case of $k=4$, and then define the corresponding family of lower bound graphs.
\subparagraph*{The fixed graph construction.}
Create a graph $G$ as follows:
The vertex set is $A_1\cup A_2 \cup B_1\cup B_2$ such that $A_1$ and $B_2$ are $n$-vertex cliques and $A_2$ and $B_1$ are a set of $n$ vertices with no edges inside of them.
Denote the vertices in $G$ as $A_i=\{a_i^1,\ldots,a_i^n\}$, $B_i=\{b_i^1,\ldots,b_i^n\}$ for $i\in\{1,2\}$. We add edges $(a_1^i,b_1^i), (a_2^i,b_2^i)$ for all $i\in[n]$.
\subparagraph*{Creating $G_{x,y}$.}
\begin{figure}[tbp]
\centering
    \includegraphics[width=0.65\hsize]{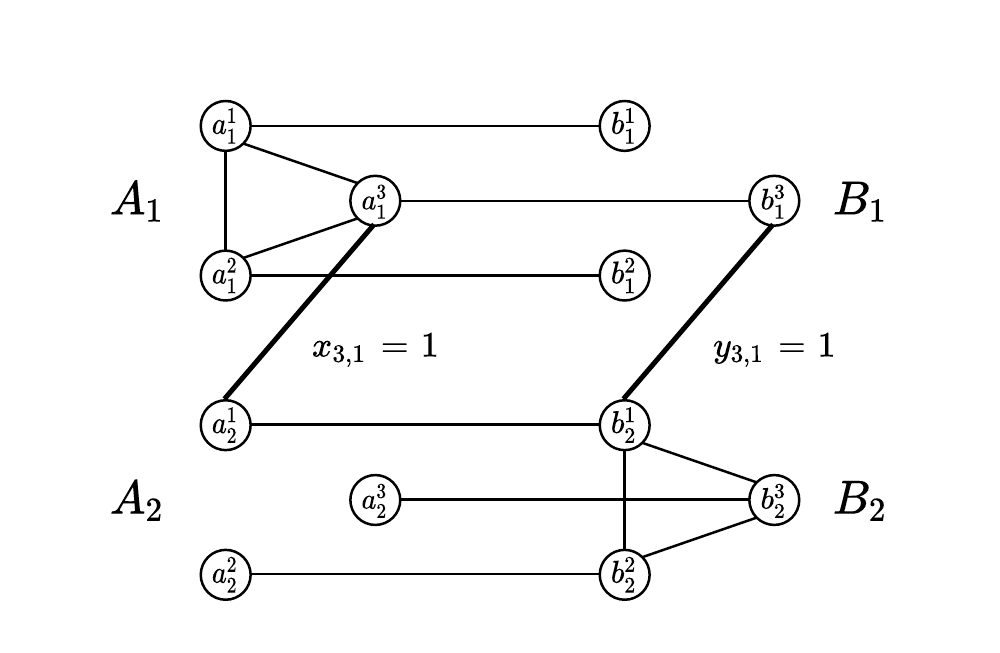}\vspace{-2mm}
    \caption{An illustration of $G_{x,y}$ for the case of $n=3$.
    The graph contains a copy of induced $C_4$ if and only if it holds that $x_{ij}=y_{ij}=1$ for some index $i,j\in [n]$. This illustration shows the case of $x_{3,1}=y_{3,1}=1$.} 
    \label{fig:LBforCk}\vspace{-3mm}
\end{figure}
For two input bit strings $x, y\in\{0,1\}^{n^2}$ and $i,j\in [n]$, we denote the ($i+(j-1)n$)-th bit of $x$ and $y$ as $x_{ij}$ and $y_{ij}$.
Edges corresponding to inputs are added as follows (see Figure~\ref{fig:LBforCk} for the illustration):
\begin{itemize}
    \item We add an edge between $a_1^i$ and $a_2^j$ if and only if $x_{ij}=1$.
    \item We add an edge between $b_1^i$ and $b_2^j$ if and only if $y_{ij}=1$.
\end{itemize}
This concludes the description of $G_{x,y}$. Next, we prove that the family $\left\{G_{x,y}|x,y\in \{0,1\}^{n^2}\right\}$ is a family of lower bound graphs with respect to set-disjointness and the predicate that says the graph does not contain an induced $4$-cycle.

\begin{claim}\label{claim_C4}
$G_{x,y}$ contains an induced $C_4$ if and only if there exists a pair of index $i, j\in[n]$ such that $x_{ij}=y_{ij}=1$.
\end{claim}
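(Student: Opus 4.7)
The plan is to prove both directions of the claim, with the forward direction being a direct construction and the reverse direction requiring a careful case analysis of which parts of the vertex partition can contribute to an induced $4$-cycle.

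For the forward (easy) direction, suppose $x_{ij} = y_{ij} = 1$. I will exhibit the induced 4-cycle on the vertices $a_1^i, a_2^j, b_2^j, b_1^i$. The four edges of the cycle come from: $x_{ij}=1$ gives $a_1^i \sim a_2^j$; the matching edge $a_2^j \sim b_2^j$; $y_{ij}=1$ gives $b_2^j \sim b_1^i$; and the matching edge $b_1^i \sim a_1^i$. The two diagonals must be checked to be non-edges, namely $a_1^i \not\sim b_2^j$ (since $E(A_1,B_2)=\emptyset$ in the fixed construction) and $a_2^j \not\sim b_1^i$ (since $E(A_2,B_1)=\emptyset$). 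This is just a sanity check against the fixed graph description.

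For the reverse direction, the main task is to show that any induced $C_4$ must pick exactly one vertex from each of $A_1, A_2, B_1, B_2$. I would argue this with four little lemmas:
\begin{enumerate}
    \item At most one vertex lies in $A_1$: if two did, they would be adjacent (clique), so must be opposite ends of a cycle-edge, forcing the remaining two vertices to form a path through $A_2 \cup B_1$; chasing the neighborhoods of $A_2$ and $B_1$ (which are contained in $A_1 \cup B_2$) forces that path to re-enter $A_1$, creating a triangle.
    \item Symmetrically, at most one vertex lies in $B_2$.
    \item At most one vertex lies in $A_2$: two vertices of $A_2$ would have to be opposite (independent set), so their two common neighbors in the $C_4$ must lie in $A_1 \cup B_2$; both in $A_1$ or both in $B_2$ is ruled out (they would be adjacent, violating the oppositeness), and the mixed case forces a $B_2$ vertex adjacent to two distinct $A_2$ vertices, contradicting that $E(A_2,B_2)$ is a matching.
    \item Symmetrically for $B_1$.
\end{enumerate}
Since these four caps sum to $4$, each part contributes exactly one vertex.

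Having established this structural lemma, the rest is almost mechanical. Write the induced $C_4$ as $a_1^p, a_2^q, b_2^r, b_1^s$. The non-edges of $C_4$ must be the pair $\{a_1^p, b_2^r\}$ and $\{a_2^q, b_1^s\}$ (those are the only forbidden pairs between these four parts), which pins down the cycle order. Requiring the four remaining pairs to be edges translates directly into $x_{pq}=1$, the matching constraint $q=r$, $y_{sr}=1$, and the matching constraint $p=s$; setting $i=p=s$ and $j=q=r$ yields $x_{ij}=y_{ij}=1$. The main obstacle is really just the case analysis in step (1) above — making sure no exotic configuration with multiple vertices in the clique parts slips through — but once the partition-by-parts structure is nailed down, reading off the disjointness witness is immediate.
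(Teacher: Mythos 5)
Your proposal is correct and follows essentially the same approach as the paper: both prove the reverse direction by a case analysis on how the four cycle vertices distribute over $A_1, A_2, B_1, B_2$, ruling out two-or-more vertices in any single part via the clique/independent-set structure and the absence of $A_1$--$B_2$ and $A_2$--$B_1$ edges, and then reading off $x_{ij}=y_{ij}=1$ once each part contributes exactly one vertex. Your framing as four small "at most one per part" lemmas is a slightly cleaner organization of the same underlying argument.
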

\begin{proof}
Let $U=\{v_1,v_2,v_3,v_4\}$ be a subset of $V$.
It is clear that if it holds $|U\cap S|=4$ for some $S\in\{A_1,A_2,B_1,B_2\}$, then $U$ does not induce $C_4$. We analyse $U$ as follows:
\begin{itemize}
    \item If it holds $|U\cap A_1|=3$ or $|U\cap B_2|=3$, $U$ induces a triangle.
    \item If it holds $|U\cap A_2|=3$ or $|U\cap B_1|=3$, $U$ induces at most three edges.
    \item If it holds $|U\cap A_1|=2$ or $|U\cap B_2|=2$, and $U$ induces a $4$-cycle, the other two vertices of $U$ are both in $A_2$ or both in $B_1$. However, It is impossible since there is no edge between any two vertices in $A_2$ and any two vertices in $B_1$. 
    \item If it holds $|U\cap A_2|=2$ or $|U\cap B_1|=2$, the two vertices does not share neighbors. Hence, $U$ does not induce a $4$-cycle.  
\end{itemize}
Now all we need is to verify whether four vertices $a_1^i$, $a_2^j$, $b_1^k$, and $b_2^{\ell}$
induce a $4$-cycle or not.
If $(a_1^i,a_2^j,b_1^k,b_2^{\ell})$
induces a $4$-cycle, we can say that $i=k$ since $a_1^i$ has to be connected to $a_2^j$ and $b_1^k$ (similarly we can say $j=\ell$). It is straightforward to show that $(a_1^i,a_2^j,b_1^i,b_2^{j})$
induces a $4$-cycle if and only if $x_{ij}=y_{ij}=1$.
\end{proof}
\subparagraph*{Proof of Theorem \ref{Theorem:induced_C4}:}
Divide the vertices of the graph $G_{x,y}$ into $V_A=A_1\cup A_2$ and $V_B=B_1\cup B_2$. The size of the cut is $|E_{cut}|=|E(V_A,V_B)|=2n$.
Claim \ref{claim_C4} shows that the family of the graphs $\left\{G_{x,y}\middle| x,y\in\{0,1\}^{n^2}\right\}$ is a family of lower bound graphs for $f=\mathrm{DISJ}_{n^2}$ and a predicate that says the graph include an induced $C_4$.
Hence, using Theorem \ref{theo:Family_of_Lower_Bound_Graphs} and $CC^R(\mathrm{DISJ}_{n^2}) = \Theta(n^2)$, any randomized algorithms for induced $4$-cycle detection in the \congest model requires $\Omega(n/\log n)$.

To extend this result to $k$-cycles for $k\geq 5$, we modify the graph $G_{x,y}$ as follows:
\begin{itemize}
    \item For any $i\in [n]$, replace the edge $(a_1^i,b_1^i)$ to a path with $\lceil\frac{k-4}{2}\rceil + 2$ vertices. 
    \item For any $i\in [n]$, replace the edge $(a_2^i,b_2^i)$ to a path with $\lfloor\frac{k-4}{2}\rfloor + 2$ vertices. 
\end{itemize}
\qed



\section{Lower Bounds for Larger Cycles}\label{sec:C8}

In this section we prove Theorem~\ref{Theorem:induced_Ck}, i.e., 
we show subquadratic, but superlinear lower bounds for induced cycles $C_{k\geq 8}$, which gives nearly tight bounds for induced cycles $C_{k\geq 8}$ with respect to $k$.
The main difficulty to obtain the bounds of Theorem~\ref{Theorem:induced_Ck} is to reduce the size of the cut edges of graphs while retaining the ability to simulate the set-disjointness function of size $\Omega(n^2)$. We overcome this difficulty by considering induced cycles that go around $G_{x,y}$ more than once instead of cycles that go around $G_{x,y}$ exactly once (we pay for this by an increased size of a cycle). This enables us to reduce the size of cut edges.

\subsection{The fixed graph construction}\par
\begin{figure}[tbp]\centering
    \includegraphics[width=0.7\hsize]{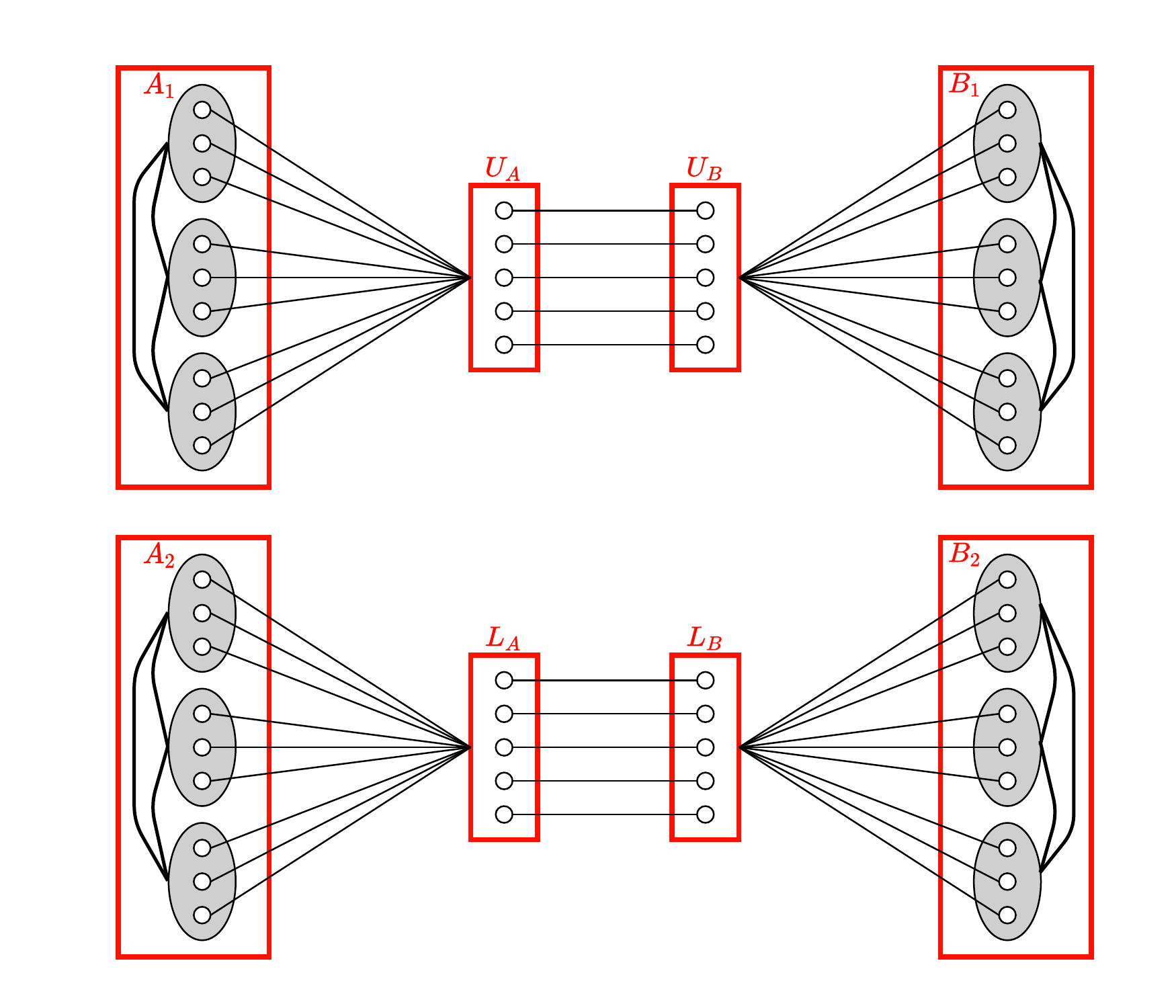}
    \caption{An illustration of the fixed part of $G_{x,y}$. Some edges are bundled for clarity. Observe that $A_1^i\subseteq A_1$ and $A_2^j\subseteq A_2$ are connected by additional edges iff $x_{i,j}=1$. Also, $B_1^i\subseteq B_1$ and $B_2^j\subseteq B_2$ are connected by additional edges iff $y_{i,j}=1$.\vspace{1cm}} 
    \label{fig:LB graph for C8}\vspace{-10mm}
\end{figure}

We refer to Figure \ref{fig:LB graph for C8} for an illustration of the construction.
\subparagraph*{Vertices.}
We define the sets of vertices as follows:

\begin{itemize}
    \item $A_k=A_k^1\cup\cdots \cup A_k^{n}$, where
    $A_k^i=\left\{ a_k^{i,j} \middle| 0\leq j \leq \ell-1 \right\}$ for $i\in [n]$ and $k\in\{1,2\}$.
    \item $B_k=B_k^1\cup\cdots\cup B_k^{n}$, where
    $B_k^i=\left\{ b_k^{i,j} \middle| 0\leq j \leq \ell-1 \right\}$ for $i\in [n]$ and $k\in\{1,2\}$.
    \item
    $U_A=\left\{ u_A^{i} \middle| 0\leq i \leq \ell n^{1/\ell} \right\}$, $L_A=\left\{ l_A^{i} \middle| 0\leq i \leq \ell n^{1/\ell} \right\}$.
    \item 
    $U_B=\left\{ u_B^{i} \middle| 0\leq i \leq \ell n^{1/\ell} \right\}$, $L_B=\left\{ l_B^{i} \middle| 0\leq i \leq \ell n^{1/\ell} \right\}$.
\end{itemize}

\noindent Each $S\in \{A_1,A_2,B_1,B_2\}$ contains $\ell n$ vertices, and they are divided into $n$ subsets of size $\ell$.
Each $C\in \{U_A,U_B,L_A,L_B\}$ contains $\ell n^{1/\ell}$ vertices. The number of vertices $V=A_1\cup A_2\cup B_1\cup B_2 \cup U_A \cup L_A \cup U_B \cup L_B$ is $\Theta(\ell n + \ell n^{1/\ell}) = \Theta(n)$.
\subparagraph{Edges.}
First, we add $2\ell n^{1/\ell}$ edges $\left\{(u_A^i,u_B^i),(l_A^i,l_B^i)\middle| i\in [\ell n^{1/\ell}]\right\}$.
Then, we consider a map from $[n]$ to $[\ell n^{1/\ell}]^\ell$, where $[\ell n^{1/\ell}]^\ell$ is $\ell$ times direct product of the set $[\ell n^{1/\ell}]$.
Since
\begin{align*}
    \begin{pmatrix}
    \ell n^{1/\ell}\\
    \ell
    \end{pmatrix}
    = \frac{\ell n^{1/\ell}}{\ell} \cdot \frac{\ell n^{1/\ell}-1}{\ell-1} \cdots \frac{\ell n^{1/\ell}-\ell + 1}{1}
    \geq \left(\frac{\ell n^{1/\ell}}{\ell}\right)^{\ell} = n
\end{align*}
holds, there exists an injection $\sigma:[n]\rightarrow [\ell n^{1/\ell}]^{\ell}$.
We arbitrarily choose one of these injections.
For $i\in [n]$, we denote $\sigma(i)=\{k_1,\ldots ,k_{\ell}\}\in [\ell n^{1/\ell}]^{\ell}$.
For all $i \in [n],j\in [\ell]$, we add the edge sets 
$\left\{ (a_1^{i,j},u_A^{k_j}) \middle| i\in [n],j\in [\ell] \right\}$,
$\left\{ (a_2^{i,j},l_A^{k_j}) \middle| i\in [n],j\in [\ell] \right\}$,
$\left\{ (b_1^{i,j},u_B^{k_j}) \middle| i\in [n],j\in [\ell] \right\}$, and
$\left\{ (b_2^{i,j},l_B^{k_j}) \middle| i\in [n],j\in [\ell] \right\}$.
Now we can determine exactly $\ell$ vertices of $U_A$ that are adjacent to vertices of $A_1^i$. We denote them $Code(A_1^i)\subseteq U_A$. In the same way, we determine the vertex sets $Code(A_2^i)\subseteq L_A$, $Code(B_1^i)\subseteq U_B$, and $Code(B_2^i)\subseteq L_B$ by using the same $\sigma$.
Since $\sigma$ is an injection, it holds that $Code(A_1^i)\neq Code(A_1^j)$, $Code(A_2^i)\neq Code(A_2^j)$, $Code(B_1^i)\neq Code(B_1^j)$, and $Code(B_2^i)\neq Code(B_2^j)$ for $i\neq j$. 

\noindent In addition, we add the following edges.
\begin{itemize}
    \item For any $i,j\in [n]$, add edges between $u\in A_1^i,v\in A_1^j$ if and only if $i\neq j$.
    \item For any $i,j\in [n]$, add edges between $u\in B_2^i,v\in B_2^j$ if and only if $i\neq j$.
\end{itemize}
If $\ell \geq 2$, we add the following edges.
\begin{itemize}
    \item For any $i,j\in [n]$, add edges between $u\in A_2^i,v\in A_2^j$ if and only if $i\neq j$.
    \item For any $i,j\in [n]$, add edges between $u\in B_1^i,v\in B_1^j$ if and only if $i\neq j$.
\end{itemize}

\subsection{Creating $G_{x,y}$}
Note that for $\ell = 1$, the fixed part of $G_{x,y}$ in this section is exactly the same as the fixed part of graphs for induced $8$-cycles in Section 3. Hence, we only describe the case $\ell \geq 2$.
Given two binary strings $x,y\in\{0,1\}^{n^2}$, we add the following edges:
\begin{itemize}
    \item For $i,j\in [n]$, add edges $\{(a_1^{i,k+1},a_2^{j,k})|k\in [\ell-1]\} \cup \{(a_1^{i,1},a_2^{j,\ell})\}$, if and only if $x_{i,j}=1$.
    \item For $i,j\in [n]$, add edges $\{(b_1^{i,k},b_2^{j,k})|k\in [\ell]\}$, if and only if $y_{i,j}=1$.
\end{itemize}

This concludes the description of $G_{x,y}$. We show the following theorem which says that $\left\{G_{x,y}\right\}$ is a family of lower bound graphs. Due to space constraint, the proof is moved to the Appendix.

\begin{theorem}\label{LB_for_C8}
$G_{x,y}$ contains an induced $8\ell$-cycle if and only if $\mathrm{DISJ}_{n^2}(x,y)=0$.
\end{theorem}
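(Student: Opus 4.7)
The plan is to prove the biconditional in two directions: an explicit construction for the ``if'' direction and a structural analysis for the ``only if'' direction.

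For the forward direction, assume $x_{i,j}=y_{i,j}=1$ and write $\sigma(i)=\{k_1,\ldots,k_\ell\}$, $\sigma(j)=\{k'_1,\ldots,k'_\ell\}$. I would exhibit an $8\ell$-cycle built from $\ell$ length-$8$ ``blocks''. Block $p\in\{1,\ldots,\ell\}$ has the shape
\[
a_1^{i,p}\to a_2^{j,p-1}\to l_A^{k'_{p-1}}\to l_B^{k'_{p-1}}\to b_2^{j,p-1}\to b_1^{i,p-1}\to u_B^{k_{p-1}}\to u_A^{k_{p-1}}\to a_1^{i,p-1},
\]
where indices on $\{1,\ldots,\ell\}$ are taken cyclically so that $p-1$ wraps to $\ell$ when $p=1$. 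Block $p$ uses the $x$-edge $(a_1^{i,p},a_2^{j,p-1})$, the $y$-edge $(b_1^{i,p-1},b_2^{j,p-1})$, four $Code$-edges, and the two identity edges $(l_A^{k'_{p-1}},l_B^{k'_{p-1}})$ and $(u_A^{k_{p-1}},u_B^{k_{p-1}})$. Concatenating the blocks in a suitable cyclic order matches each block's endpoint to the next block's start vertex, closing into a cycle of $8\ell$ edges. It remains to verify that (i) the $8\ell$ vertices are pairwise distinct---immediate because the $k_m$'s and $k'_m$'s are distinct (as $\sigma$ takes values in $\ell$-subsets) and $i,j$ are fixed---and (ii) no chord exists. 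For (ii) the key point is that each of the bipartite adjacencies $A_1^i$--$A_2^j$, $B_1^i$--$B_2^j$, $A_1^i$--$U_A$, $B_1^i$--$U_B$, $A_2^j$--$L_A$, $B_2^j$--$L_B$ is a perfect matching of exactly $\ell$ edges, \emph{all} of which already appear as cycle edges, so no new edge among cycle vertices arises; the remaining block pairs are either non-adjacent in $G$ (e.g.\ $A_1^i$ with $L_A$) or restrict to an independent set on the selected vertices (e.g.\ inside $A_1^i$ or inside $U_A$).

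For the reverse direction, let $C$ be any induced $8\ell$-cycle. I would use the ``horizontal'' cut $F=\{(u_A^m,u_B^m),(l_A^m,l_B^m):m\}$, which splits $V$ into $V_A=A_1\cup A_2\cup U_A\cup L_A$ and $V_B=B_1\cup B_2\cup U_B\cup L_B$. Then $C$ uses an even number $2c$ of $F$-edges and decomposes into $2c$ alternating arcs, $c$ in each half. A first observation is that each arc has length $\geq 3$: arcs of length $1$ or $2$ are ruled out because $U_A, L_A$ are independent, there is no $U_A$--$L_A$ edge, and every $A_1$- or $A_2$-vertex is adjacent to exactly one $U_A$- (resp.\ $L_A$-) vertex. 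Hence $c \leq \ell$. A case analysis of the possible induced arc shapes combined with a length-balance argument should then force $c = \ell$ and force each $V_A$-arc to be a length-$3$ path of the form $u_A^*$--$a_1^{i,*}$--$a_2^{j,*}$--$l_A^*$ through a single $x$-edge, with the pair $(i,j)$ common across all $\ell$ such arcs (and symmetrically a pair $(i',j')$ on the $B$-side from $y$-edges). Finally, the identity edges of $F$ together with the injectivity of $\sigma$ force $(i,j) = (i',j')$, yielding $x_{i,j} = y_{i,j} = 1$.

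The main obstacle is this structural case analysis. One must rule out arcs of type ``$U_A$-to-$U_A$'' or ``$L_A$-to-$L_A$'' (which would exploit the clique-minus-matching edges of $A_1$ or $A_2$) as well as arcs that chain through several $A_1^{i'}$ or $A_2^{j'}$ sub-blocks, both of which might in principle realize an $8\ell$-cycle without forcing a common index pair. Controlling these requires carefully tracking which $U_A/L_A$ vertices can be reached from each $A_1^i/A_2^j$ via the matching $Code$-edges, and combining this with the total length budget of $8\ell$ and with the distinctness of the $\ell$ coordinates produced by $\sigma$.
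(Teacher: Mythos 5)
Your forward direction (common index $\Rightarrow$ induced $8\ell$-cycle) is correct and actually more explicit than the paper, which dismisses it as ``easily checked'': the block construction, the vertex-distinctness argument, and the observation that every relevant bipartite adjacency between consecutive vertex sets is an $\ell$-edge perfect matching entirely consumed by the cycle (so no chord can appear) are all right, modulo the small omission of also checking the $U_A$--$U_B$ and $L_A$--$L_B$ matchings, which works the same way.

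The reverse direction, however, is a plan rather than a proof, and the gap you flag at the end is precisely where the paper spends all its effort. You cut along $F$, observe every arc has length $\geq 3$, deduce $c\leq\ell$, and then assert that ``a case analysis of the possible induced arc shapes combined with a length-balance argument should force $c=\ell$ and force each $V_A$-arc to be a length-$3$ path $u_A^*$--$a_1^{i,*}$--$a_2^{j,*}$--$l_A^*$ through a single $x$-edge, with the pair $(i,j)$ common across all $\ell$ such arcs.'' That assertion is the theorem; the inequality $c\leq\ell$ alone does not rule out, say, some $U_A$-to-$U_A$ arcs compensated by longer arcs elsewhere, nor arcs that hop between several blocks $A_1^{i'}$ using the clique-minus-matching edges, nor two $V_A$-arcs using different $(i,j)$ pairs. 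Ruling these out is exactly the content of the paper's Lemma~\ref{C8l_lemma} (a multi-case argument showing $|\mathcal{C}\cap S|\leq\ell$ for each $S\in\{A_1,A_2,B_1,B_2\}$, including the delicate sub-cases when the $\ell$ chosen $A_1$-vertices try to straddle two blocks) and Lemma~\ref{C8l_lemma2} (the counting inequality $z(A_1)\geq z(U_A)$, etc., which forces equality $=\ell$ in all eight sets); the paper then observes the Code-matching structure pins the four block indices and forces $i=s$, $j=t$. Your arc-counting view is a legitimate alternative lens --- it roughly plays the role of Lemma~\ref{C8l_lemma2} --- but without carrying out the shape analysis (the analogue of Lemma~\ref{C8l_lemma}), the ``only if'' direction is not established.
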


Having constructed a family of lower bound graphs, we are now ready to prove Theorem \ref{Theorem:induced_Ck}.
\subparagraph*{Proof of Theorem \ref{Theorem:induced_Ck}:}
Theorem \ref{LB_for_C8} implies that a family of graphs 
\begin{align*}
    \left\{G_{x,y}=(V_A\cup V_B,E_{x,y})\middle|x,y\in \{0,1\}^{n^2}\right\}
\end{align*} where $V_A=A_1\cup A_2 \cup U_A \cup L_A$, $V_B=B_1\cup B_2 \cup U_B \cup L_B$ is a family of lower bound graphs with respect to the set disjointness function $\mathrm{DISJ}_{n^2}$ and the graph predicate is whether the graph has a copy of an induced $C_{8\ell}$ or not with cut size $\ell n^{1/\ell}$.
To bound the diameter of the network to $3$, we add nodes $c_A$ to $V_A$ and $c_B$ to $V_B$ such that $c_A$ is connected to all nodes in $V_A$ and $c_B$ is connected to all nodes in $V_B$. Finally, we add an edge $(c_A,c_B)$.
The above modification does not effect to the existence of induced $8\ell$-cycles: If we choose $c_A$ as one of the cycle nodes, then we cannot choose more than two $V_A$ nodes as cycle nodes. However, we cannot choose more than $8\ell - 4$ nodes from $V_B$ due to Lemma \ref{C8l_lemma} of Appendix \ref{appA}, which also holds after this modification.
The theorem is proved by applying Theorem \ref{theo:Family_of_Lower_Bound_Graphs} (for $m=0$). Slightly modifying the graphs gives the same complexity for the case of $k=8\ell+m$ where $m\in\{1,2,\ldots ,7\}$: 
\begin{itemize}
    \item Replace each edge $e\in U_A\times U_B$ by a path of length $\lfloor m/2\rfloor$.
    \item Replace each edge $e\in L_A\times L_B$ by a path of length $\lceil m/2\rceil$.
\end{itemize}
\qed

\section{Limitation of the Two-Party Communication Framework}
Since no $\tilde{O}(n)$-round algorithm for detecting an induced $k$-cycle for $k\geq 5$ is known, the main question is whether our lower bound can be improved or not.
In this section, we show that the family of lower bound graphs cannot derive any better lower bounds for detecting an induced $k$-cycle for $k\leq 7$, by giving a two-party communication protocol for listing $k$-cycles for $k\leq 7$ in the vertex partition model which is defined as follows.
\begin{definition}[Vertex Partition Model,~\cite{czumaj2020detecting}]
Given a graph $G=(V_A\cup V_B,E_A\cup E_B\cup E_{cut})$ where $E_A=E(V_A,V_A), E_B=E(V_B,V_B)$ and $E_{cut}=E(V_A,V_B)$, \textit{the vertex partition model} is a two-party communication model in which Alice receives $G_A=(V_A,E_A\cup E_{cut})$ as the input and Bob receives $G_B=(V_B,E_B\cup E_{cut})$ as the input.  
For any graph $H$, the $O(k)$ communication protocol for induced $H$ listing is a protocol such that
\begin{itemize}
    \item The players communicate $O(k)$ bits in the protocol.
    \item At the end of the protocol, the players have their lists of $H$, denoted by $A_H,B_H$, such that all of copies of $H$ in the input graph $G$ are contained in either $A_H$ or $B_H$.
\end{itemize}
\end{definition}

\begin{theorem}\label{theorem:cycle_protocol}
There is a two-party communication protocol in the vertex partition model for listing all induced $k$-cycles for $k\leq 7$ that uses $\tilde{O}(n|E_{cut}|)$ bits of communication where $E_{cut}$ is the set of cut edges in the input graph.
\end{theorem}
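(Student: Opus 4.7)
The plan is to exhibit a two-round protocol for listing all induced $k$-cycles with $k\le 7$ at a cost of $\tilde{O}(n|E_{cut}|)$ bits. In the only communication step, for every cut edge $(a,b)\in E_{cut}$ Alice transmits the $O(n)$-bit characteristic vector of $N(a)\cap V_A$ and Bob transmits that of $N(b)\cap V_B$. The total cost is $\tilde{O}(n|E_{cut}|)$, and the effect is that Alice learns $N(b)\cap V_B$ for every cut-edge-$B$-endpoint $b$ while Bob learns $N(a)\cap V_A$ for every cut-edge-$A$-endpoint $a$. Induced cycles lying entirely in one side are listed locally by the corresponding player, so the remaining task is to cover all \emph{mixed} induced cycles, i.e.\ those using at least one cut edge.

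For a mixed induced $k$-cycle $C$ with $c\ge 2$ cut edges, call a cycle-vertex \emph{internal} if it is incident to no cut edge of $C$, and write $m_A, m_B$ for the numbers of internal vertices in $V_A, V_B$ respectively. Since $C$ decomposes into $c/2$ maximal $V_A$-paths of sizes $p_1,\ldots,p_{c/2}$ and $c/2$ maximal $V_B$-paths of sizes $q_1,\ldots,q_{c/2}$, we have $m_A=\sum_i \max(p_i-2,0)$ and $m_B=\sum_j \max(q_j-2,0)$. The combinatorial heart of the argument is the claim: for $k\le 7$ and $c\in\{2,4,6\}$, one always has $\min(m_A,m_B)\le 1$. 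Indeed, $m_A\ge 2$ forces $k_A\ge c/2+3$ (either one path of size $\ge 4$ together with $c/2-1$ singletons, or two paths of size $\ge 3$ together with $c/2-2$ singletons), and symmetrically $m_B\ge 2$ forces $k_B\ge c/2+3$, so both would imply $k\ge c+6\ge 8$, contradicting $k\le 7$. This threshold is tight: for $k=8$ with $c=2$ and $k_A=k_B=4$ we have $m_A=m_B=2$, consistent with the superlinear lower bound of Theorem~\ref{Theorem:induced_Ck} kicking in at $k\ge 8$.

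Equipped with this claim, the listing stage proceeds as follows. Alice outputs every mixed induced $k$-cycle with $m_B\le 1$, and Bob symmetrically those with $m_A\le 1$; by the claim this covers all mixed cycles. To produce such a cycle, Alice enumerates the $c\le 6$ cut edges of $C$, the $V_A$-vertices of $C$ together with the required $V_A$-edges (all visible in $G_A$), and the at-most-one internal $V_B$-vertex $b'$ when $m_B=1$ (chosen from an intersection of received neighborhoods). Checking that a candidate yields a genuine induced cycle reduces to a constant number of edge/non-edge queries: intra-$V_A$ queries are answered by $G_A$; cut-edge queries are common knowledge; and because at most one $V_B$ cycle-vertex is not a cut-edge-$B$-endpoint, every intra-$V_B$ query involves some cut-edge-$B$-endpoint $b$ whose neighborhood $N(b)\cap V_B$ Alice has received. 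Bob's procedure is symmetric, and the combinatorial fact $\min(m_A,m_B)\le 1$ for $k\le 7$ is the only non-routine ingredient in the analysis.
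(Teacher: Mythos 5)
Your proposal is correct, and the protocol is the same as the paper's: each player sends, for every cut-edge endpoint on their own side, that vertex's same-side neighborhood, at cost $\tilde{O}(n|E_{cut}|)$. The only variation is in how coverage of all mixed cycles is argued: the paper assigns each cycle to whichever player holds at least $\lceil k/2\rceil$ of its vertices (so the other side has $\leq 3$ vertices, hence at most one internal vertex), whereas you phrase the coverage directly as $\min(m_A,m_B)\le 1$ for $k\le 7$; both reduce to the same observation that the ``minority'' side of any short induced cycle contains at most one vertex that is not a cut-edge endpoint, and your derivation of that bound from $m_S\ge 2 \Rightarrow k_S\ge c/2+3$ is sound.
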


\begin{proof}
Let $V'_A(V'_B)$ be a set of $V_A(V_B)$ vertices which are incident to some cut edge. 
The protocol is as follows:
\begin{enumerate}
    \item Bob sends all edges $E_B\cap \{V'_B\times V_B\}$ to Alice in $\tilde{O}(n|E_{cut}|)$ bits since the number of edges Bob sends to Alice is less than
\begin{align*}
    \sum_{v\in V'_B} \text{deg}_{V_B}(v)\leq \sum_{v\in V'_B} n =n|E_{cut}|.
\end{align*}
\item Alice sends all edges $E_A\cap \{V'_A\times V_A\}$ to Bob in $\tilde{O}(n|E_{cut}|)$ bits since the number of edges Alice sends to Bob is less than
\begin{align*}
    \sum_{v\in V'_A} \text{deg}_{V_A}(v)\leq \sum_{v\in V'_A} n =n|E_{cut}|.
\end{align*}
\end{enumerate}
Consider Alice has to list all induced $k$-cycles such that at least $\lceil k/2 \rceil$ vertices of them are in $V_A$. 
Let $U$ be a set of vertices in a copy of an induced $k$-cycle Alice should list in the input graph $G$.
Since $k\leq 7$, $U$ contains at most three vertices in $V_B$.
If $U$ has at least one vertex in $V_B$, then the $k$-cycle induced by $U$ has two cut edges. Therefore, we have $U\times U\subseteq E_A\cup E_{cut}\cup \left\{E_B\cap\{V'_B\times V_B\}\right\}$. Step 1 of the protocol enables Alice to list all induced $k$-cycles she should list. Similarly, step 2 of the protocol enables Bob to list all induced $k$-cycles he should list. Now all induced $k$-cycles of the input graph $G$  are in either the list of Alice or the list of Bob.
\end{proof}

\setcounter{theorem}{3}
\begin{theorem}[Formal statement]\label{theorem_ck_limitation}
For any $\varepsilon > 0$, no family of lower bound graphs gives an $\tilde{\Omega}(n^{1+\varepsilon})$ lower bound of induced $k$-cycle detection for $k\leq 7$.
\end{theorem}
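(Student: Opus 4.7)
The plan is to combine Theorem~\ref{theorem:cycle_protocol} with the family-of-lower-bound-graphs machinery of Theorem~\ref{theo:Family_of_Lower_Bound_Graphs}: the protocol gives a $\tilde{O}(n|E_{cut}|)$ upper bound on $CC^R(f)$ for any function $f$ that can be reduced to induced $k$-cycle detection through such a family, and this upper bound collapses the resulting \congest lower bound to $\tilde{O}(n)$, which is strictly weaker than $\tilde{\Omega}(n^{1+\varepsilon})$.

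Concretely, I would take an arbitrary family $\{G_{x,y}\}$ for some $f:\{0,1\}^K\times\{0,1\}^K\to\{0,1\}$ with respect to the predicate ``$G$ contains an induced $k$-cycle'' in the sense of Definition~\ref{def:LBGraph}, and convert it into a two-party protocol for $f$ in the vertex partition model. On inputs $(x,y)$, Alice locally assembles $G_{x,y}$ restricted to $V_A$ together with the cut edges $E_{cut}$; this is possible because items~(1) and~(2) of Definition~\ref{def:LBGraph} guarantee that $V_A$, $V_B$ and $E_{cut}$ are input-independent and that $E(V_A,V_A)$ depends only on $x$. Bob symmetrically builds his half from $y$. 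They then invoke the listing protocol of Theorem~\ref{theorem:cycle_protocol}, exchanging $\tilde{O}(n|E_{cut}|)$ bits, and end with local lists whose union covers every induced $k$-cycle of $G_{x,y}$. One extra bit of communication per player suffices to determine whether either list is non-empty, which by item~(4) of Definition~\ref{def:LBGraph} coincides with $f(x,y)$. This yields $CC^R(f)=\tilde{O}(n|E_{cut}|)$; substituting into Theorem~\ref{theo:Family_of_Lower_Bound_Graphs} then bounds the \congest round lower bound extractable from the family by $\Omega(CC^R(f)/(|E_{cut}|\log n))=\tilde{O}(n)$, independently of the choice of $f$ and the graph construction. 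In particular, no family can certify an $\tilde{\Omega}(n^{1+\varepsilon})$ lower bound, which is exactly the theorem.

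Since Theorem~\ref{theorem:cycle_protocol} does all the real work, there is no substantive technical obstacle. The one point that needs care is the listing-to-detection reduction, but this is immediate: whether an induced $k$-cycle exists in $G_{x,y}$ is just the OR of the indicators ``Alice's list is non-empty'' and ``Bob's list is non-empty'', so it contributes only $O(1)$ extra bits and does not affect the $\tilde{O}(n|E_{cut}|)$ communication bound.
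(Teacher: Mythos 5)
Your proof is correct and follows essentially the same route as the paper: it reduces a family of lower bound graphs for $f$ to a vertex-partition protocol for $f$ via Theorem~\ref{theorem:cycle_protocol}, concludes $CC^R(f)=\tilde{O}(n|E_{cut}|)$, and substitutes into Theorem~\ref{theo:Family_of_Lower_Bound_Graphs} to cap the derivable lower bound at $\tilde{O}(n)$. You merely spell out the (immediate) listing-to-detection reduction and the input-independence of the cut, which the paper leaves implicit.
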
\setcounter{theorem}{13}
\begin{proof}
For the family of lower bound graphs \begin{align*}
    \left\{G_{x,y}=(V_A\cup V_B,E_A\cup E_B\cup E_{cut})\middle|x,y\in\{0,1\}^K\right\}
\end{align*}for $f:\{0,1\}^K\times \{0,1\}^K \rightarrow \{0,1\}$ and the property which says that the graph contains an induced $k$-cycle, we can show an $\tilde{\Omega}(CC^R(f)/|E_{cut}|)$ lower bound for induced $k$-cycle detection. On the other hand, we can solve $f$ by $\tilde{O}(n|E_{cut}|)$ bits of communication through the protocol of the vertex partition model in Theorem~\ref{theorem:cycle_protocol}. Then it holds $|E_{cut}|=\tilde{\Omega}( CC^R(f)/n)$, implying that for any $\varepsilon > 0$, we cannot derive an $\tilde{\Omega}(n^{1+\varepsilon})$ lower bound for induced $k$-cycle listing by the family of lower bound graphs.
\end{proof}

\section{Lower Bound for Diamond Listing}
We know that the round complexity of $4$-clique detection is $\tilde{\Theta}(\sqrt{n})$, and  induced $4$-cycle detection is $\tilde{\Theta}(n)$. 
The only four-node graph that lies between $4$-clique and $4$-cycle is the \textit{diamond}, which is the four-node graph obtained by removing one edge from a $4$-clique.
Intuitively, the complexity of diamond detection seems to be somewhere between the complexity of $4$-clique and the complexity of $4$-cycle.
In this section, we make this intuition precise, and we show a lower bound for induced diamond listing (this result is complemented by the upper bound of Theorem \ref{theorem_diamond_upper_bound} shown in Appendix \ref{appD}). Our construction of the family of lower bound graphs is similar to \cite{czumaj2020detecting}, but has the following differences.
The graphs of \cite{czumaj2020detecting} have two sets of vertices $A$ and $B$, and edges between $A$ and $B$ are added randomly so that the size of the cut edges is $O(n^{3/2})$.
In this random graph, \textit{w.h.p.}, the number of tuples of the form $(a_1,a_2,b_1,b_2)$ where $a_1,a_2\in A$ and $b_1,b_2\in B$ which corresponds to the $i$-th bit of the input strings $x,y$ is $\Omega(n^2)$: a tuple $(a_1,a_2,b_1,b_2)$ induces a $4$-clique if and only if $x_i=y_i=1$.
However, in the case of diamonds, the number of tuples which correspond to the input is $o(n^2)$. To avoid this, we construct the family of lower bound graphs in a different way. This makes it much easier to analyze the properties of the graph.
\subparagraph*{The fixed graph construction.}
We refer to Figure \ref{fig:LB} for an illustration.
\begin{figure}[tbp]\centering
    \includegraphics[width=0.65\hsize]{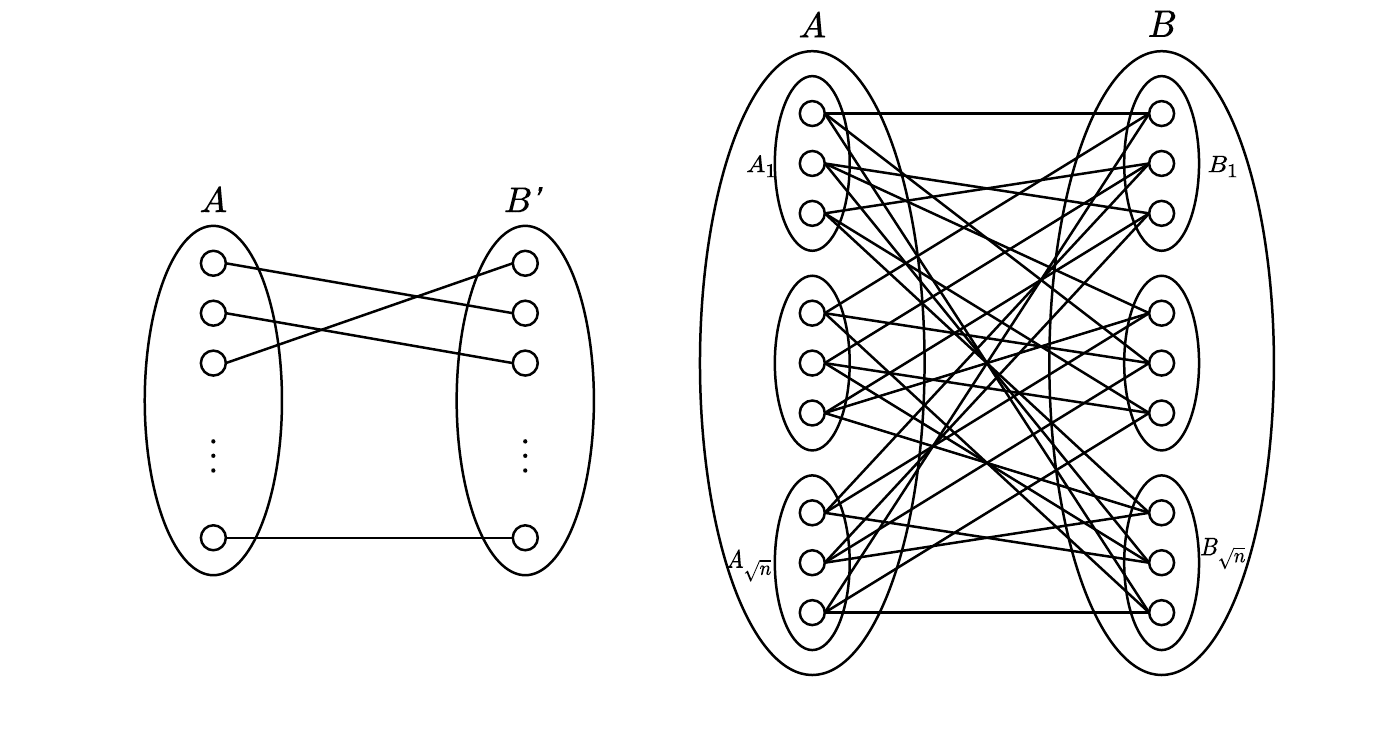}\vspace{-5mm}
    \caption{The cut edges in the family of lower bound graphs for listing diamonds. Many edges are omitted for clarity.} 
    \label{fig:LB}\vspace{-5mm}
\end{figure}
The set of vertices is $V=A\cup B\cup B'$ such that $A,B$ and $B'$ are sets of $n$ vertices. Each vertex is denoted as follows:
\begin{itemize}
    \item $A=A_1\cup A_2\cup \cdots \cup A_{\sqrt{n}}$, where $A_i = \left\{a_i^j\middle| j\in[\sqrt{n}]\right\}$ for all $i\in[\sqrt{n}]$.
    \item $B=B_1\cup B_2\cup \cdots \cup B_{\sqrt{n}}$, where $B_i = \left\{b_i^j\middle| j\in[\sqrt{n}]\right\}$ for all $i\in[\sqrt{n}]$.
    \item $B'=\left\{b'_{i}\middle| i\in [n] \right\}$.
\end{itemize}
We add the edge set $\left\{(a_i^j, b'_{j+(i-1)\sqrt{n}}) \middle|i,j\in [\sqrt{n}] \right\}$.
For any $i,j\in [\sqrt{n}]$, we choose a bijection uniform randomly from all possible bijection $\sigma:A_i\rightarrow B_j$, and denote it $\sigma_{i,j}:A_i\rightarrow B_j$.
Then, we add the edge set 
$
    \left\{ (a_i^k,\sigma_{i,j}(a_i^k))\middle| i,j\in [\sqrt{n}], k\in[\sqrt{n}]  \right\}.
$
\subparagraph*{Creating $G_{x,y}$.}
We call a pair $(a_i^j, a_k^\ell)$ is good iff $|N(a_i^j)\cap N(a_k^\ell)|=1$, where $N(u)$ is a set of neighbors of a vertex $u$.
Let $P_A$ be the set of good pairs in $A\times A$. That is, $P_A:=\left\{ (a_i^j,a_k^\ell)\middle||N(a_i^j)\cap N(a_k^\ell)|=1, i,j,k,\ell\in[\sqrt{n}] \right\}$.

\begin{lemma}\label{diamond_lemma}
There is a graph $G$ created by the above procedure, in which it holds that $|P_A|=\Omega(n^2)$. 
\end{lemma}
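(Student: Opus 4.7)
}
The plan is to compute the expected value $\mathbb{E}[|P_A|]$ over the random choice of the bijections $\{\sigma_{i,j}\}$ and show it is $\Omega(n^2)$; since $|P_A| \le n^2$ deterministically, the existence of a graph realising $|P_A|=\Omega(n^2)$ follows. First I would inspect the neighbourhoods: each vertex $a_i^j$ has exactly one neighbour $b'_{j+(i-1)\sqrt n}$ in $B'$ (and this map is a bijection, so two distinct $A$-vertices never share a $B'$-neighbour), and exactly $\sqrt n$ neighbours in $B$, namely $\sigma_{i,m}(a_i^j)\in B_m$ for $m\in[\sqrt n]$. Hence $|N(a_i^j)\cap N(a_k^\ell)|=\sum_{m=1}^{\sqrt n} \mathbf{1}[\sigma_{i,m}(a_i^j)=\sigma_{k,m}(a_k^\ell)]$.

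Next I split the analysis by whether $i=k$. If $i=k$ and $j\neq\ell$, then for every $m$ the single bijection $\sigma_{i,m}$ maps $a_i^j$ and $a_i^\ell$ to distinct vertices, so the intersection is empty and such a pair is never good. So only pairs with $i\neq k$ can contribute, and there are $n^2-n^{3/2}=\Omega(n^2)$ ordered such pairs (I will work with ordered pairs, which changes $|P_A|$ only by a factor of $2$). For a fixed pair $(a_i^j,a_k^\ell)$ with $i\neq k$, the bijections $\sigma_{i,m}$ and $\sigma_{k,m}$ are independent uniform bijections $A_i\to B_m$, $A_k\to B_m$, so for each fixed $m$ the event $X_m:=\mathbf{1}[\sigma_{i,m}(a_i^j)=\sigma_{k,m}(a_k^\ell)]$ has probability exactly $1/\sqrt n$, and the events $X_1,\dots,X_{\sqrt n}$ are mutually independent across $m$.

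Hence $\sum_m X_m$ is a sum of $\sqrt n$ independent $\mathrm{Ber}(1/\sqrt n)$ variables, and a direct binomial computation gives
\begin{equation*}
\Pr\!\left[\,\textstyle\sum_m X_m = 1\,\right] \;=\; \sqrt n\cdot \frac{1}{\sqrt n}\cdot\!\left(1-\frac{1}{\sqrt n}\right)^{\!\sqrt n -1}\!\!\xrightarrow[n\to\infty]{}\; e^{-1},
\end{equation*}
so for all sufficiently large $n$ each pair $(a_i^j,a_k^\ell)$ with $i\neq k$ is good with probability at least some absolute constant $c>0$. Linearity of expectation then yields $\mathbb{E}[|P_A|]\ge c(n^2-n^{3/2})=\Omega(n^2)$, and the probabilistic method delivers a concrete graph with $|P_A|=\Omega(n^2)$.

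The main obstacle is the exact-intersection estimate, i.e.\ controlling $\Pr[\sum_m X_m = 1]$ from below by a constant rather than just $1/e$ in the limit; however this is purely a calculation with the identity above, using $(1-1/\sqrt n)^{\sqrt n-1}\ge e^{-1}(1-o(1))$. The independence of the $X_m$ across different $m$ (which is what makes the Poisson-type computation clean) relies crucially on the fact that the bijections are chosen independently for each $(i,j)$ pair, a point I would emphasise to distinguish the construction from the random-graph construction of \cite{czumaj2020detecting} and to explain why the analysis becomes tractable here.
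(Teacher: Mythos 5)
Your proof proposal is correct and takes essentially the same approach as the paper's own proof: fix a pair $(a_i^j,a_k^\ell)$ with $i\neq k$, observe that the $\sqrt n$ coincidence events across the blocks $B_1,\dots,B_{\sqrt n}$ are independent $\mathrm{Ber}(1/\sqrt n)$ variables, compute $\Pr[\text{exactly one match}]=\sqrt n\cdot\tfrac{1}{\sqrt n}\cdot(1-\tfrac{1}{\sqrt n})^{\sqrt n-1}$, and apply linearity of expectation with the probabilistic method. The only cosmetic differences are that you are explicit about why the $B'$-neighbours never contribute to the intersection and why $i=k$ pairs are never good (the paper implicitly restricts $N(\cdot)$ to $B$ and only considers $k\neq i$), and you use the asymptotic limit $\to e^{-1}$ where the paper uses the exact bound $(1-1/\sqrt n)^{\sqrt n-1}\geq 1/e$.
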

The proof of Lemma \ref{diamond_lemma}
can be found in Appendix \ref{appB}.
Consider the graph $G$ in which $|P_A|=\Omega(n^2)$.
Let $\mathcal{H}=\emptyset$.
We partition $A$ randomly into two sets $A^*$ and $A\backslash A^*$ so that $|A^*|=n/2$. 
For a pair $(a_1,a_2)\in P_A$, there is only one vertex $b_1\in N(a_1)\cap N(a_2)$. 
For $a_1$, there is only one vertex $b_2 \in N(a_1)\cap B'$.
We add a quadruple $(a_1,a_2,b_1,b_2)$ to $\mathcal{H}$ iff $|\{a_1,a_2\}\cap A^*| = 1$. This condition holds with probability $\frac{1}{2}$. Then, we remove $(a_1,a_2)$ from $P_A$. We continue this operation until $P_A$ becomes the empty set.
Therefore, after this process, there are $|\mathcal{H}|=\Omega(n^2)$ quadruples in $\mathcal{H}$ with high probability, using Chernoff bound.
We label $\mathcal{H}$ as $\mathcal{H}=\{ h_1,h_2,...,h_{|\mathcal{H}|} \}$ and relabel quadruples as $h_k=(a_{k,1},a_{k,2},b_{k,1},b_{k,2})$.
Consider two bit strings $x,y\in\{0,1\}^{|\mathcal{H}|}$. 
We create a graph $G_{x,y}$ by adding edges to $G$ as follows:
\begin{itemize}
    \item If $x_k=1$, we add an edge between $a_{k,1}$ and $a_{k,2}$.
    \item If $y_k=1$, we add an edge between $b_{k,1}$ and $b_{k,2}$.
\end{itemize}For a quadruple $D=(u_1,u_2,u_3,u_4)$ of vertices, we say that $D$ is an $(i,j)$-diamond when
\begin{itemize}
    \item $(u_1,u_2,u_3,u_4)$ induces a diamond,
    \item $|A\cap \{u_1,u_2,u_3,u_4\}|=i$ and $|(B\cup B')\cap \{u_1,u_2,u_3,u_4\}|=j$.
\end{itemize}
\begin{lemma}\label{lemma_diamond}
$G_{x,y}$ contains a (2,2)-diamond if and only if there exists a pair of indices $i,j\in[\sqrt{|\mathcal{H}|}]$ such that $x_{ij}=y_{ij}=1$.
\end{lemma}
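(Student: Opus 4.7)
The plan is to prove the biconditional by direct case analysis, handling the two directions separately.

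For the backward direction, I would fix an index $k$ with $x_k = y_k = 1$ and argue that the four vertices of the quadruple $h_k = (a_{k,1}, a_{k,2}, b_{k,1}, b_{k,2})$ induce a diamond in $G_{x,y}$. I would enumerate the six possible edges among these vertices: $(a_{k,1}, a_{k,2})$ is present from $x_k = 1$; $(b_{k,1}, b_{k,2})$ is present from $y_k = 1$; $(a_{k,1}, b_{k,1})$ and $(a_{k,2}, b_{k,1})$ are fixed edges since $b_{k,1}$ is the unique common neighbor of the good pair $\{a_{k,1}, a_{k,2}\}$; $(a_{k,1}, b_{k,2})$ is a fixed edge since $b_{k,2}$ is the $B'$-neighbor of $a_{k,1}$; and $(a_{k,2}, b_{k,2})$ is absent because every $B'$-vertex has exactly one $A$-neighbor (namely $a_{k,1}$ for $b_{k,2}$, which differs from $a_{k,2}$). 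The resulting five edges form two triangles sharing $(a_{k,1}, b_{k,1})$, i.e., a diamond; since two vertices lie in $A$ and two in $B \cup B'$, it is a $(2,2)$-diamond.

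For the forward direction, I would suppose $G_{x,y}$ contains a $(2,2)$-diamond on vertices $u_1, u_2 \in A$ and $u_3, u_4 \in B \cup B'$ and case-split on the unique missing edge. Exploiting that $A$-$A$ edges in $G_{x,y}$ come only from $x$ (as good-pair edges), that internal $B \cup B'$ edges come only from $y$ (always of $B$-$B'$ type), and that every $B'$-vertex has exactly one $A$-neighbor, I expect two of the three sub-cases to contradict quickly: if the missing edge is $(u_1, u_2)$, then the present edge $(u_3, u_4)$ is $B$-$B'$ and its $B'$-endpoint would need two $A$-neighbors; if the missing edge is within $B \cup B'$, then $\{u_1, u_2\}$ is a good pair forced to have two common neighbors. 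The surviving case has the missing edge crossing the partition, WLOG $(u_2, u_4)$, where I can extract $k$ with $x_k = 1$ from the edge $(u_1, u_2)$ and $k'$ with $y_{k'} = 1$ from the edge $(u_3, u_4)$.

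The main obstacle will be to conclude that $k = k'$. My plan is to pin down the coordinates of $h_k$ and $h_{k'}$ one at a time: the uniqueness of the $A$-neighbor of the $B'$-vertex $u_4 = b_{k',2}$ forces $u_1 = a_{k',1}$; the $A^*$-convention (placing the $A^*$-member of each good pair into the $a_1$-slot) then forces $u_1 = a_{k,1}$, so $a_{k,1} = a_{k',1}$; since $b_2$ is determined by $a_1$, we get $b_{k,2} = b_{k',2}$; and the uniqueness of the common neighbor of the good pair $\{u_1, u_2\}$ gives $b_{k,1} = u_3 = b_{k',1}$. The delicate final step is to argue $a_{k,2} = a_{k',2}$, and here I would invoke a uniqueness property of the fixed graph $G$---namely, that given $a_1$ and a common neighbor $b_1$, there is at most one non-$A^*$ partner $a_2$ making $\{a_1,a_2\}$ a good pair with common neighbor $b_1$. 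I expect this property to be a consequence of the random construction of $G$ and to be available as part of (or alongside) the analysis of Lemma~\ref{diamond_lemma} in Appendix~\ref{appB}; without it, the lower bound argument would degrade to a weaker AND-of-OR form rather than the desired set-disjointness equivalence.
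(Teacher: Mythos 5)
Your proof tracks the paper's own argument closely. For the backward direction the paper simply states it is clear; your explicit verification of the five present edges and the single absent edge $(a_{k,2},b_{k,2})$ is exactly what that assertion unpacks to. For the forward direction the paper case-splits on whether $(b_1,b_2)\in E$ rather than on the identity of the missing edge, but the two organizations are interchangeable: after ruling out a missing $A$--$A$ edge and a missing edge inside $B\cup B'$, both arguments conclude that $\{u_1,u_2\}$ is a good pair whose unique common neighbor is $u_3\in B$, that $u_4\in B'$ is the $B'$-neighbor of the member of $\{u_1,u_2\}$ adjacent to it, and hence that $(u_1,u_2,u_3,u_4)$ coincides with some quadruple $h_k\in\mathcal{H}$.

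The one place you go beyond the paper is in flagging that one must still argue the index $k$ witnessing $x_k=1$ and the index $k'$ witnessing $y_{k'}=1$ coincide; the paper's proof passes over this in silence. You are right to pause there, but the uniqueness property you hope to import is not in Appendix~\ref{appB} (which only proves $|P_A|=\Omega(n^2)$) and is in fact false for the random $G$: fix $a_1\in A_i$ and $b_1\in N(a_1)\cap B_m$; for every $j\neq i$ the vertex $\sigma_{j,m}^{-1}(b_1)\in A_j$ is a neighbor of $b_1$ and forms a good pair with $a_1$ with probability roughly $e^{-1}$, so the expected number of partners $a_2$ whose sole common neighbor with $a_1$ is $b_1$ is $\Theta(\sqrt{n})$, and the $A^*$-filter only halves this. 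There is also a global obstruction: $b_{k,2}$ determines $a_{k,1}$ via the $A$-to-$B'$ bijection, and $b_{k,1}$ must then be one of the $\sqrt{n}$ $B$-neighbors of $a_{k,1}$, so at most $n^{3/2}$ distinct pairs $(b_{k,1},b_{k,2})$ can ever appear, while the reduction needs $|\mathcal{H}|=\Omega(n^2)$ quadruples; pigeonhole forces repeated $(b_1,b_2)$ coordinates. So you have correctly located where the argument is delicate, and also (implicitly) imputed an ordering convention that the paper never states; but the uniqueness you reach for cannot hold in the regime the theorem needs, and your forward direction, like the paper's, leaves the $k=k'$ step unestablished.
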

\begin{proof}
It is clear that if $x_k=y_k=1$, then $h_k=(a_{k,1},a_{k,2},b_{k,1},b_{k,2})$ induces a diamond.
Consider four vertices $a_1,a_2 \in A, b_1,b_2\in B\cup B'$ that induce a (2,2)-diamond.
If it holds that $(b_1,b_2)\notin E$, then $(a_1,a_2)\in E$. Thus, a pair $(a_1,a_2)$ is good. It contradicts that vertices $a_1$, $a_2$,  $b_1$, and $b_2$ induce a diamond.
Assume that $(b_1,b_2)\in E$. Without loss of generality, we assume $b_1\in B$, $b_2\in B'$. Since a pair $(a_1,a_2)$ is good, $b_1= N(a_1)\cap N(a_2)$ and $(a_1, a_2)\in E$.
Then, there is an index $k\in [|\mathcal{H}|]$ such that 
$h_k=(a_1,a_2,b_1,b_2)$ since $(a_2,b_2)\notin E$ holds.
\end{proof}
\subparagraph*{Proof of Theorem \ref{theorem_diamond}:}
Consider that Alice and Bob construct the graph $G_{x,y}$ where $V_A=A, V_B=B\cup B'$. By simulating an $r$-round \congest algorithm $\mathcal{A}$ that solves listing all diamonds, they can compute the set disjointness function of size $|\mathcal{H}|=\Omega(n^2)$: Bob tells Alice if there exists a (2,2)-diamond in the output of vertices simulated by Bob by sending 1 bit.  Then, from Lemma \ref{lemma_diamond}, Alice knows $\mathrm{DISJ}_{|\mathcal{H}|}(x,y)$ since Alice can know  whether $G_{x,y}$ contains a (2,2)-diamond. 
The number of edges between Alice and Bob is $n^{3/2} + n = \Theta(n^{3/2})$. Hence, $O(rn^{3/2} \log n) =\Omega(|\mathcal{H}|)$ and this means $r=\Omega(\sqrt{n}/\log n)$.  \qed



\bibliography{index}

\begin{thebibliography}{10}

\bibitem{abboud2016near}
Amir Abboud, Keren Censor-Hillel, and Seri Khoury.
\newblock Near-linear lower bounds for distributed distance computations, even
  in sparse networks.
\newblock In {\em Proceedings of the 30th International Symposium on
  Distributed Computing (DISC 2016)}, pages 29--42, 2016.

\bibitem{censor2021tight}
Keren Censor-Hillel, Yi{-}Jun Chang, Fran{\c{c}}ois~Le Gall, and Dean
  Leitersdorf.
\newblock Tight distributed listing of cliques.
\newblock In {\em Proceedings of the 32nd ACM-SIAM Symposium on Discrete
  Algorithms (SODA 2021)}, pages 2878--2891, 2021.

\bibitem{censor2019fast}
Keren Censor-Hillel, Eldar Fischer, Gregory Schwartzman, and Yadu Vasudev.
\newblock Fast distributed algorithms for testing graph properties.
\newblock {\em Distributed Computing}, 32(6):41--57, 2019.

\bibitem{censor2020fast}
Keren Censor-Hillel, Orr Fischer, Tzlil Gonen, Fran{\c{c}}ois Le~Gall, Dean
  Leitersdorf, and Rotem Oshman.
\newblock Fast distributed algorithms for girth, cycles and small subgraphs.
\newblock In {\em Proceedings of the 34th International Symposium on
  Distributed Computing (DISC 2020)}, pages 33:1--33:17, 2020.

\bibitem{censor2020distributed}
Keren Censor-Hillel, Fran{\c{c}}ois~Le Gall, and Dean Leitersdorf.
\newblock On distributed listing of cliques.
\newblock In {\em Proceedings of the 39th ACM Symposium on Principles of
  Distributed Computing (PODC 2020)}, pages 474--482, 2020.

\bibitem{censor2019algebraic}
Keren Censor-Hillel, Petteri Kaski, Janne~H Korhonen, Christoph Lenzen, Ami
  Paz, and Jukka Suomela.
\newblock Algebraic methods in the congested clique.
\newblock {\em Distributed Computing}, 32(6):461--478, 2019.

\bibitem{censor2017quadratic}
Keren Censor-Hillel, Seri Khoury, and Ami Paz.
\newblock Quadratic and near-quadratic lower bounds for the {CONGEST} model.
\newblock In {\em Proceedings of the 31st International Symposium on
  Distributed Computing (DISC 2017)}, pages 10:1--10:16, 2017.

\bibitem{chang2019distributed}
Yi{-}Jun Chang, Seth Pettie, and Hengjie Zhang.
\newblock Distributed triangle detection via expander decomposition.
\newblock In {\em Proceedings of the 30th Annual ACM-SIAM Symposium on Discrete
  Algorithms (SODA 2019)}, pages 821--840, 2019.

\bibitem{chang2019improved}
Yi{-}Jun Chang and Thatchaphol Saranurak.
\newblock Improved distributed expander decomposition and nearly optimal
  triangle enumeration.
\newblock In {\em Proceedings of the 38th ACM Symposium on Principles of
  Distributed Computing (PODC 2019)}, pages 66--73, 2019.

\bibitem{corneil1985linear}
Derek~G. Corneil, Yehoshua Perl, and Lorna~K Stewart.
\newblock A linear recognition algorithm for cographs.
\newblock {\em SIAM Journal on Computing}, 14(4):926--934, 1985.

\bibitem{czumaj2020detecting}
Artur Czumaj and Christian Konrad.
\newblock Detecting cliques in {CONGEST} networks.
\newblock {\em Distributed Computing}, 33(6):533--543, 2020.

\bibitem{dolev2012tri}
Danny Dolev, Christoph Lenzen, and Shir Peled.
\newblock “tri, tri again”: Finding triangles and small subgraphs in a
  distributed setting.
\newblock In {\em Proceedings of the 26th International Symposium on
  Distributed Computing (DISC 2012)}, pages 195--209, 2012.

\bibitem{drucker2014power}
Andrew Drucker, Fabian Kuhn, and Rotem Oshman.
\newblock On the power of the congested clique model.
\newblock In {\em Proceedings of the 33rd ACM Symposium on Principles of
  Distributed Computing (PODC 2014)}, pages 367--376, 2014.

\bibitem{eden2019sublinear}
Talya Eden, Nimrod Fiat, Orr Fischer, Fabian Kuhn, and Rotem Oshman.
\newblock Sublinear-time distributed algorithms for detecting small cliques and
  even cycles.
\newblock In {\em Proceedings of the 33rd International Symposium on
  Distributed Computing (DISC 2019)}, pages 15:1--15:16, 2019.

\bibitem{eisenbrand2004complexity}
Friedrich Eisenbrand and Fabrizio Grandoni.
\newblock On the complexity of fixed parameter clique and dominating set.
\newblock {\em Theoretical Computer Science}, 326(1-3):57--67, 2004.

\bibitem{fischer2018possibilities}
Orr Fischer, Tzlil Gonen, Fabian Kuhn, and Rotem Oshman.
\newblock Possibilities and impossibilities for distributed subgraph detection.
\newblock In {\em Proceedings of the 30th ACM Symposium on Parallelism in
  Algorithms and Architectures (SPAA 2018)}, pages 153--162, 2018.

\bibitem{fraigniaud2017distributed}
Pierre Fraigniaud, Pedro Montealegre, Dennis Olivetti, Ivan Rapaport, and Ioan
  Todinca.
\newblock Distributed subgraph detection.
\newblock {\em arXiv preprint arXiv:1706.03996}, 2017.

\bibitem{fraigniaud2019distributed}
Pierre Fraigniaud and Dennis Olivetti.
\newblock Distributed detection of cycles.
\newblock {\em ACM Transactions on Parallel Computing (TOPC)}, 6(3):1--20,
  2019.

\bibitem{fraigniaud2016distributed}
Pierre Fraigniaud, Ivan Rapaport, Ville Salo, and Ioan Todinca.
\newblock Distributed testing of excluded subgraphs.
\newblock In {\em Proceedings of the 30th International Symposium on
  Distributed Computing (DISC 2016)}, 2016.

\bibitem{itai1978finding}
Alon Itai and Michael Rodeh.
\newblock Finding a minimum circuit in a graph.
\newblock {\em SIAM Journal on Computing}, 7(4):413--423, 1978.

\bibitem{izumi2020quantum}
Taisuke Izumi, Fran{\c{c}}ois~Le Gall, and Fr{\'e}d{\'e}ric Magniez.
\newblock Quantum distributed algorithm for triangle finding in the {CONGEST}
  model.
\newblock In {\em Proceedings of the 37th International Symposium on
  Theoretical Aspects of Computer Science (STACS 2020)}, pages 23:1--23:13,
  2020.

\bibitem{izumi2017triangle}
Taisuke Izumi and Fran{\c{c}}ois Le~Gall.
\newblock Triangle finding and listing in {CONGEST} networks.
\newblock In {\em Proceedings of the 36th ACM Symposium on Principles of
  Distributed Computing (PODC 2017)}, pages 381--389, 2017.

\bibitem{korhonen2021distributed}
Janne~H. Korhonen and Amir Nikabadi.
\newblock Beyond distributed subgraph detection: Induced subgraphs,
  multicolored problems and graph parameters, 2021.
\newblock \href {http://arxiv.org/abs/2109.06561} {\path{arXiv:2109.06561}}.

\bibitem{Korhonen2017deterministic}
Janne~H. Korhonen and Joel Rybicki.
\newblock Deterministic subgraph detection in broadcast {CONGEST}.
\newblock In {\em Proceedings of the 21th International Conference on
  Principles of Distributed Systems (OPODIS 2017)}, page 4:1–4:16, 2017.

\bibitem{kowaluk2013counting}
Miros{\l}aw Kowaluk, Andrzej Lingas, and Eva-Marta Lundell.
\newblock Counting and detecting small subgraphs via equations.
\newblock {\em SIAM Journal on Discrete Mathematics}, 27(2):892--909, 2013.

\bibitem{le2018sublinear}
Fran{\c{c}}ois Le~Gall and Fr{\'e}d{\'e}ric Magniez.
\newblock Sublinear-time quantum computation of the diameter in {CONGEST}
  networks.
\newblock In {\em Proceedings of the 37th ACM Symposium on Principles of
  Distributed Computing (PODC 2018)}, pages 337--346, 2018.

\bibitem{nevsetvril1985complexity}
Jaroslav Ne{\v{s}}et{\v{r}}il and Svatopluk Poljak.
\newblock On the complexity of the subgraph problem.
\newblock {\em Commentationes Mathematicae Universitatis Carolinae},
  26(2):415--419, 1985.

\bibitem{pandurangan2018distributed}
Gopal Pandurangan, Peter Robinson, and Michele Scquizzato.
\newblock On the distributed complexity of large-scale graph computations.
\newblock In {\em Proceedings of the 30th ACM Symposium on Parallelism in
  Algorithms and Architectures (SPAA 2018)}, pages 405--414, 2018.

\bibitem{razborov1990distributional}
Alexander~A Razborov.
\newblock On the distributional complexity of disjointness.
\newblock In {\em Proceedings of the 17th International Colloquium on Automata,
  Languages, and Programming (ICALP 1990)}, pages 249--253, 1990.

\bibitem{williams2014finding}
Virginia~Vassilevska Williams, Joshua~R Wang, Ryan Williams, and Huacheng Yu.
\newblock Finding four-node subgraphs in triangle time.
\newblock In {\em Proceedings of the 24th annual ACM-SIAM symposium on Discrete
  algorithms (SODA 2014)}, pages 1671--1680, 2014.

\end{thebibliography}

\appendix
\section{Proof of Theorem~\ref{LB_for_C8}}\label{appA}
We first show the following two lemmas.

\begin{lemma}\label{C8l_lemma}
Any subset of vertices $\mathcal{C}\subseteq V$ of size $8\ell$ in $G_{x,y}$ which induces $C_{8\ell}$ contains at most $\ell$ vertices in $S$ where $S\in \{A_1,A_2,B_1,B_2\}$.
\end{lemma}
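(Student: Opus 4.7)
The plan is to fix $S\in\{A_1,A_2,B_1,B_2\}$ and analyze $W:=\mathcal{C}\cap S$; by symmetry it suffices to treat $S=A_1$. Recall that for $\ell\ge 2$ the graph $A_1$ is a complete multipartite graph $K_{\ell,\ldots,\ell}$ with $n$ parts (and simply $K_n$ when $\ell=1$). I would decompose $W$ into maximal consecutive \emph{runs} of the cycle lying in $A_1$; each run forms an induced path in $A_1$. Since complete multipartite graphs contain no induced $P_4$---a hypothetical induced $P_4$ on $v_1,v_2,v_3,v_4$ would force all four vertices into a single part by the three non-edge constraints, contradicting the edges along the path---every run has at most $3$ vertices. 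Moreover, vertices from two different runs are not cycle-adjacent, so for the cycle to be induced they must be non-adjacent in $G$ as well, which in the multipartite structure means they lie in a common part.

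Next I would argue that if $W$ contains two or more runs then, since any run of length $\ge 2$ already uses two distinct parts, having multiple runs forces every run to have length $1$ and all of $W$ to sit in a single part $A_1^a$, giving $|W|\le |A_1^a|=\ell$. Otherwise $W$ is a single run of length at most $3$, so $|W|\le 3$. Together these yield $|W|\le\max(\ell,3)$, which already proves the lemma whenever $\ell\ge 3$.

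For $\ell=1$ the bound must be sharpened to $|W|\le 1$. A length-$3$ run is impossible because each part has size $1$; a length-$2$ run contributes exactly one $A_1$-internal edge to the cycle, leaving $8\ell-1=7$ cycle edges that form a closed walk from $A_1$ back to $A_1$ on the meta-$8$-cycle (whose non-loop edges form a bipartite graph). Such closed walks must have even length, contradicting $7$, so $|W|\le 1=\ell$.

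The remaining case $\ell=2$ is the main obstacle. Here the only way to exceed $|W|=\ell=2$ is a single run of length $3$, in which $\{v_1,v_3\}=A_1^a$ (an entire size-$2$ part) and $v_2\in A_1^b$. I would perform a case analysis on where the endpoints $v_1,v_3$ exit $A_1$ (into $U_A\cup A_2$): in each configuration, tracing the cycle through the $\mathrm{Code}$-defined edges, the matchings $U_A$--$U_B$ and $L_A$--$L_B$, and crucially the ``shift'' pairing of the $x$-input edges between $A_1^a$ and $A_2^j$, propagates forced identifications around the cycle until a chord is produced. For instance, in the sub-case where both exits land in a common $A_2^j$, one successively forces $v_5,v_{15}\in L_A$, then $v_6,v_{14}\in L_B$, then $v_7,v_{13}\in B_2^j$, and $v_8,v_{12}\in B_1^j$; the remaining four-step walk from $v_8$ to $v_{12}$ then admits no chord-free completion, since every choice for $v_9$ in $\{U_B, B_2, B_1\}$ either creates an adjacency with an existing cycle vertex or violates a matching/pairing constraint. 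The other two exit configurations (both to $U_A$, or one to $U_A$ and one to $A_2$) are ruled out by analogous propagation arguments. The delicate part of the proof is carrying out this case analysis without omissions, carefully tracking how $\sigma$ and the $x$-input permutation interact with the cycle's matching structure.
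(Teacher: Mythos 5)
Your ``runs'' decomposition for $\ell\ge 3$ is correct and is a cleaner framing of the paper's argument: the paper first shows that $\mathcal{C}$ can meet at most two parts of $A_1$ (else a triangle is induced), then rules out the split sizes $(1,\ge 3)$ (a degree-$3$ vertex in the cycle) and $(\ge 2,\ge 2)$ (an induced $C_4$); you reach the same bound $|W|\le\max(\ell,3)$ by observing that complete multipartite graphs contain no induced $P_4$ and that two or more runs force all of $W$ into a single part. Both proofs rest on the same underlying structural facts, and yours packages them more transparently.

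The genuine gap is the $\ell=2$ case, which is where nearly all of the work in the paper's Appendix A lies. You correctly reduce to a single length-$3$ run with $\{v_1,v_3\}=A_1^a$ and $v_2$ in another part, and you correctly enumerate three exit configurations --- in fact you are more careful than the paper here, since the paper simply asserts, without argument, that the mixed configuration (one exit to $U_A$ and one to $A_2$) cannot occur. But you then trace the propagation for only one of the three sub-cases and explicitly defer the rest, calling it the ``delicate part''. That deferred case analysis \emph{is} the lemma for $\ell=2$: the paper, in each sub-case, traces the forced sets (e.g.\ $Code(A_2^k)$, $Code(B_2^k)$, $B_2^k$ in one sub-case; $Code(A_1^i)$, $Code(B_1^i)$, $B_1^i$, $B_2^k$, $Code(B_2^k)$, $Code(A_2^k)$ in the other) and argues no completion to an induced $C_{16}$ exists, and it is not evident that the remaining two configurations are ``analogous'' rather than requiring their own tracing. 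Until that is carried out, the lemma is unproved for $\ell=2$. A smaller issue: your $\ell=1$ parity argument is unsound as written, because for $\ell=1$ the set $B_2$ is also a clique, so the cycle could also use an intra-$B_2$ edge, making the number of non-loop meta-steps $8-1-e_{B_2}$, which can be even; this is moot, though, since the paper only constructs $G_{x,y}$ for $\ell\ge 2$.
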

\begin{proof}
It is enough to check the case of $S=A_1$. Let $c$ be the number of index $i\in [n]$ such that $|\mathcal{C}\cap A_1^i| > 0$. 
 At first, observe that if it holds that $|\mathcal{C}\cap A_1^i|\geq 1$, $|\mathcal{C}\cap A_1^j|\geq 1$, and $|\mathcal{C}\cap A_1^k|\geq 1$ for some distinct $i,j,k \in [n]$, 
    $\mathcal{C}$ induces a triangle. Hence, we have $c\leq 2$.
    The proof is completed by the following case analysis.
\begin{enumerate}
    \item \textbf{The case of $\ell \geq 3$:} 
    Suppose that $|\mathcal{C}\cap A_1| > \ell$. Then we have $c=2$. Let $i,j\in[n]$ be the indices such that $|\mathcal{C}\cap A_1^i|\geq |\mathcal{C}\cap A_1^j| > 0$.
    If $|\mathcal{C}\cap A_1^j|= 1$, then $|\mathcal{C}\cap A_1^i|\geq 3$. This is not possible since the vertices in $\mathcal{C}\cap A_1^j$
    are connected at least three vertices in $\mathcal{C}\cap A_1^i$.
    If $|\mathcal{C}\cap A_1^j|\geq 2$, then $|\mathcal{C}\cap A_1^i|\geq 2$.
    This is not possible since $\mathcal{C}$ contains a $4$-cycle in this case.
    \item \textbf{The case of $\ell = 2$:} Suppose that $|\mathcal{C}\cap A_1| > \ell=2$. Then we have $c=2$ and 
    let $i,j\in[n]$ be the indices such that $|\mathcal{C}\cap A_1^i|\geq |\mathcal{C}\cap A_1^j| > 0$. If $|\mathcal{C}\cap A_1^i|= 2$ and $|\mathcal{C}\cap A_1^j|= 2$,
    then $\mathcal{C}$ induces $C_4$. Hence, we consider $|\mathcal{C}\cap A_1^i|= 2$ and $|\mathcal{C}\cap A_1^j|= 1$. Denote $\{a_1^{i,1},a_1^{i,2}\}=\mathcal{C}\cap A_1^i, \{u\}=\mathcal{C}\cap A_1^j$. Then, $(a_1^{i,1},u),(a_1^{i,2},u)\in E\cap\{\mathcal{C}\times \mathcal{C}\}$. The other edges which incident on $a_1^{i,1},a_1^{i,2}$ are both in $A_2$ or both in $U_A$.
    \begin{enumerate}
        \item In the former case, we have that $\mathcal{C}\cap A_2=A_2^k = \{a_2^{k,1}, a_2^{k,2}\}$ for some $k\in [n]$, otherwise $\mathcal{C}$ includes an induced $5$-cycle.
        Observe that due to our construction of $G_{x,y}$, six vertices in $\mathcal{C}$ are automatically determined to $Code(A_2^k),Code(B_2^k),$ and $B_2^k$. It can be easily checked that no matter how we choose the remaining vertices, $\mathcal{C}$ does not induce a $16$-cycle.
        \item In the latter case, it is automatically determined that $\mathcal{C}$ includes $Code(A_1^i), Code(B_1^i),$ and $B_1^i$, due to our construction of $G_{x,y}$. In addition, $\mathcal{C}$ includes $B_2^k,Code(B_2^k),Code(A_2^k)$ for some $k \in [n]$. 
        Then, $\mathcal{C}$ does not induce a $16$-cycle since two vertices of $Code(A_2^k)$ do not share neighbors. 
    \end{enumerate}
\end{enumerate}
\end{proof}

\begin{lemma}\label{C8l_lemma2}
Any subset of vertices $\mathcal{C}\subseteq V$ of size $8\ell$ in $G_{x,y}$ which induces $C_{8\ell}$ contains  $\ell$ vertices in $S$, where $S\in \{A_1,A_2,B_1,B_2,U_A,U_B,L_A,L_B\}$.
\end{lemma}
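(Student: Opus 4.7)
The plan is a short injection argument. First I would establish the core structural fact: distinct vertices of $U_A$ have disjoint neighbourhoods in $G_{x,y}$. This is immediate from the construction, because the neighbours of $u_A^i$ are $u_B^i$ together with those $a_1^{k,m}\in A_1$ with $k_m=i$ (where $\sigma(k)=(k_1,\ldots,k_\ell)$), and each $a_1^{k,m}$ is adjacent to exactly one vertex of $U_A$ (namely $u_A^{k_m}$). The $x$- and $y$-dependent edges lie in $A_1\times A_2$ and $B_1\times B_2$ and so do not touch $U_A$, making this description valid in every $G_{x,y}$. Symmetric statements hold for $U_B$, $L_A$, and $L_B$.

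Next I would show $|\mathcal{C}\cap U_A|\le|\mathcal{C}\cap A_1|$. Each $u_A^i\in\mathcal{C}$ has at most one cycle-neighbour in $U_B$ (namely $u_B^i$), hence at least one of its two cycle-neighbours lies in $A_1$. Sending each $u_A^i\in\mathcal{C}\cap U_A$ to any such $A_1$-cycle-neighbour gives a well-defined map $f:\mathcal{C}\cap U_A\to\mathcal{C}\cap A_1$; it is injective because $f(u_A^i)=f(u_A^j)$ would place that common vertex in $N(u_A^i)\cap N(u_A^j)=\emptyset$. The symmetric arguments yield $|\mathcal{C}\cap U_B|\le|\mathcal{C}\cap B_1|$, $|\mathcal{C}\cap L_A|\le|\mathcal{C}\cap A_2|$, and $|\mathcal{C}\cap L_B|\le|\mathcal{C}\cap B_2|$.

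To conclude, I would combine these four inequalities with Lemma~\ref{C8l_lemma}, which gives $|\mathcal{C}\cap S|\le\ell$ for $S\in\{A_1,A_2,B_1,B_2\}$:
\[
8\ell = |\mathcal{C}| \;\le\; 2\bigl(|\mathcal{C}\cap A_1| + |\mathcal{C}\cap A_2| + |\mathcal{C}\cap B_1| + |\mathcal{C}\cap B_2|\bigr) \;\le\; 8\ell.
\]
Equality must therefore hold throughout, so $|\mathcal{C}\cap A_i|=|\mathcal{C}\cap B_i|=\ell$ for $i\in\{1,2\}$, and by equality in the injection bounds, $|\mathcal{C}\cap U_A|=|\mathcal{C}\cap U_B|=|\mathcal{C}\cap L_A|=|\mathcal{C}\cap L_B|=\ell$ as well. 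I do not expect any substantial obstacle; the only place where the fine structure of the fixed graph really enters is the disjoint-neighbourhood observation, which is essentially a restatement of the injectivity of $\sigma$.
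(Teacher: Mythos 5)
Your proof is correct and takes essentially the same approach as the paper: the paper establishes $z(U_A)\le z(A_1)$ (and the three symmetric bounds) by double-counting cycle edges between $U_A$ and $A_1$ using the same two facts you use (each cycle vertex in $U_A$ has a cycle-neighbour in $A_1$, and each $A_1$-vertex has exactly one $U_A$-neighbour, i.e.\ your disjoint-neighbourhood observation), and then closes with the same $8\ell\le 2\sum$ arithmetic combined with Lemma~\ref{C8l_lemma}. Your injection phrasing is a cosmetic variant of the paper's edge-counting, not a different argument.
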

\begin{proof}
For $S\subseteq V$, we denote $z(S)=|\mathcal{C}\cap S|$.
Observe that the number of edges in $E_{\mathcal{C}}$ between $A_1$ and $U_A$ is at least $z(U_A)$ since each vertex in $\mathcal{C}\cap U_A$ has at least one neighbor in $\mathcal{C}\cap A_1$. On the other hand, the number of edges in $E_{\mathcal{C}}$ between $A_1$ and $U_A$ is at most $z(A_1)$ since each vertex in $\mathcal{C}\cap A_1$ has at most one neighbor in $\mathcal{C}\cap U_A$. Hence, we have that $z(A_1) \geq z(U_A)$. Similar observation shows that $z(A_2) \geq z(L_A)$, $z(B_1) \geq z(U_B)$, and $z(B_2) \geq z(L_B)$.
Then, it holds that
\begin{align*}
    8\ell &= z(A_1)+z(A_2)+z(U_A)+z(L_A)+z(B_1)+z(B_2)+z(U_B)+z(L_B)\\
    &\leq 2(z(A_1)+z(A_2)+z(B_1)+z(B_2)).
\end{align*}
From Lemma~\ref{C8l_lemma}, we have $z(A_1)=z(A_2)=z(B_1)=z(B_2)=\ell$.
We also have $z(U_A)=z(L_A)=z(U_B)=z(L_B)=\ell$ since
\begin{align*}
    4\ell = z(U_A)+z(L_A)+z(U_B)+z(L_B).
\end{align*}
\end{proof}

\noindent\textbf{Proof of Theorem~\ref{LB_for_C8}: }
Let $\mathcal{C}\subseteq V$ be a set of $8\ell$ vertices which induces an $8\ell$-cycle in $G_{x,y}$.
From Lemma \ref{C8l_lemma} and Lemma \ref{C8l_lemma2}, $\mathcal{C}$ contains $\ell$ vertices in each $A_1^i,A_2^j,B_1^s$, and $B_2^t$ for some $i,j,s,t\in [n]$. Since $A_1^i$ must be connected to $\mathcal{C}\cap U_A$, it holds $Code(A_1^i)=\mathcal{C}\cap U_A$. Similarly, we have that $Code(A_2^j)=\mathcal{C}\cap L_A$, $Code(B_1^s)=\mathcal{C}\cap U_B$, and
$Code(B_2^t)=\mathcal{C}\cap L_B$.
It can be easily checked that $\mathcal{C}=A_1^i\cup A_2^j \cup B_1^s \cup B_2^t \cup Code(A_1^i)\cup Code(A_2^j) \cup Code(B_1^s) \cup Code(B_2^t)$ induces $C_{8\ell}$ iff $i=s, j=t,$ and $x_{i,j}=y_{i,j}=1$. \qed

\section{Proof of Lemma \ref{diamond_lemma}}\label{appB}
Let $N(a_i^j)=\{b_1,b_2,...,b_{\sqrt{n}}\}$ be the set of neighbors of $a_i^j$ in $B$. Consider $A_k$ such that $k\neq i$. 
For any $b_l\in N(a_i^j)$, just one vertex that is connected to $b_l$ is chosen uniformly at random from $A_k$. For $a\in A_k$, let $X(a)$ be the indicator variable of the event ``the pair $(a,a_i^j)$ forms a good pair''. 
Then, the expected value of $X(a)$ is
$E\left(X(a)\right) = \sqrt{n}\cdot \frac{1}{\sqrt{n}}\cdot \left( \frac{\sqrt{n}-1}{\sqrt{n}}\right)^{\sqrt{n}-1} \geq 1/e$.
Hence, the expected value of the number of vertices in $A_k$ that form good pairs with $a_i^j$ is greater than $\sqrt{n}/e$ by linearity of expectation. The expected value of the number of vertices that form good pairs with $a_i^j$ is $\sqrt{n}/e \times (\sqrt{n}-1) = \Omega(n)$. Again, by using linearity of expectation, the expected value $E(|P_A|)\geq \Omega(n)\cdot n /2=\Omega(n^2)$. This means that there exists a graph with the condition holds.
 \qed

\section{Proof of Theorem~\ref{theorem_diamond_limitation}}\label{appC}
We show the two-party communication protocol for listing diamonds by modifying the protocol for listing cliques in \cite{czumaj2020detecting}. More precisely, we show the following theorem.

\begin{theorem}
There is a two-party communication protocol in the vertex partition model for listing all diamonds that uses $\tilde{O}(\sqrt{n}|E_{cut}|)$ communication where $E_{cut}$ is a set of cut edges in the input graph.
\end{theorem}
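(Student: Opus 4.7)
The plan is to adapt the two-party $4$-clique listing protocol of \cite{czumaj2020detecting} to handle induced diamonds. Recall that a diamond has two \emph{center} vertices $a,b$ that are adjacent and each of degree $3$ in the subgraph, and two \emph{side} vertices $c,d$ that are non-adjacent and each of degree $2$; the main new difficulty compared to cliques is that the protocol must verify both the presence of the five diamond edges and the \emph{absence} of the edge $(c,d)$.

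I would first classify vertices as \emph{heavy} if their cut degree exceeds $\sqrt{n}$ and \emph{light} otherwise; there are at most $2|E_{cut}|/\sqrt{n}$ heavy vertices in total. In a first phase, each party sends all of its interior edges incident to any heavy vertex on its own side, which costs $\tilde O(\sqrt{n}\,|E_{cut}|)$ bits per party and guarantees that every diamond containing a heavy vertex can be listed locally by the party that holds the majority of its vertices. In a second phase, each party transmits one bit per \emph{critical pair} on its side --- a pair of light same-side vertices that share at least two common cut-neighbors on the other side --- indicating whether that pair is connected by an interior edge. The crucial counting step is that the number of critical pairs is at most $\sum_{w\ \text{light}}\binom{\deg_{cut}(w)}{2}\leq \frac{\sqrt{n}}{2}\sum_w \deg_{cut}(w)=\sqrt{n}\,|E_{cut}|$, so this phase also fits in $\tilde O(\sqrt{n}\,|E_{cut}|)$ bits.

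The parties then split responsibility: a diamond with at least three vertices on Alice's side is listed by Alice, and the remaining ones by Bob. In a $3{-}1$ split, the singleton vertex is incident to two or three cut edges coming from the diamond, so the majority-side party already sees all diamond edges from the cut edges plus its own side. In the $2{-}2$ split with both centers on the same side, one party learns the presence of the center edge (and the non-edge between the opposite sides) from the other through the critical-pair exchange of phase 2, since both $(a,b)$ and $(c,d)$ are critical pairs with common cut-neighbors $\{c,d\}$ and $\{a,b\}$ respectively. The trickiest case, and what I expect to be the main obstacle, is the $2{-}2$ split with centers split across the cut: here the listing party must certify an interior edge on the non-listing side whose endpoints share only one common cut-neighbor coming from the diamond, so this pair need not be critical. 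To handle it I would add a third phase in which, for every light vertex $w$ on one side, the party owning the other side sends the subgraph of its interior edges induced on the light cut-neighborhood of $w$; the total number of pairs so sent is again bounded by $\sum_{w\ \text{light}}\binom{\deg_{cut}(w)}{2}=O(\sqrt{n}\,|E_{cut}|)$, giving $\tilde O(\sqrt{n}\,|E_{cut}|)$ bits overall. Verifying that these three phases jointly allow the listing of every induced diamond --- in particular, carefully checking that the mixed $2{-}2$ case is covered by phase 3 and that heavy-vertex cases are absorbed by phase 1 --- will be the bulk of the remaining work, and follows the same amortization-over-cut-degree strategy used for cliques in \cite{czumaj2020detecting}.
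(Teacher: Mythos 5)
Your protocol is correct and follows the same high-level strategy as the paper's proof: classify vertices into heavy and light by cut degree, have each party send its interior edges incident to heavy vertices, and for each light vertex have the opposite party send the induced interior subgraph on that vertex's cut-neighborhood, amortizing the cost against cut degree in both steps. The paper uses a ratio threshold ($\deg_{V_B}(v) > \deg_{V_A}(v)/\sqrt{n}$) inherited from Czumaj et al.\ and dispatches one of its three $2$--$2$ diamond types to that paper's clique-listing subroutine verbatim; you instead use the absolute threshold $\deg_{cut}(v) > \sqrt{n}$, which makes the phase~1 and phase~3 bounds immediate and also removes the need for the paper's separate $|E_{cut}| \geq n^{3/2}$ preamble. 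Two remarks. First, your phase~2 is redundant: in every all-light $2$--$2$ diamond --- whether both centers lie on one side or are split across the cut --- the two vertices on the non-listing side are both cut-neighbors of some light vertex on the listing side, so phase~3 already delivers the single interior bit (edge or non-edge) that the listing party needs; your own analysis of the mixed case already shows this, and the same reasoning applies unchanged to the same-side-centers case. Second, the counting you give for phase~2, $\sum_{w\text{ light}}\binom{\deg_{cut}(w)}{2}$, does not actually bound the critical pairs as you define them, since a light same-side pair all of whose common cut-neighbors are heavy is critical but contributes nothing to that sum; this is harmless only because such pairs are absorbed by phase~1. Dropping phase~2 leaves a clean two-phase protocol that matches the stated $\tilde O(\sqrt{n}\,|E_{cut}|)$ bound and is, if anything, slightly simpler than the paper's three-case argument.
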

\begin{proof}
If $|E_{cut}|\geq n^{3/2}$, Alice can send $E_A$ to Bob within $O(\sqrt{n}|E_{cut}|)$ bits of communication since it holds $\sqrt{n}|E_{cut}|\geq n^2$. Suppose that $|E_{cut}| < n^{3/2}$.
Since Alice (and Bob) can list all diamonds in which three or four vertices in Alice's side without communication, we only care about diamonds in which exactly two vertices are in Alice's side.
Let $V_A^{heavy}=\{v\in V_A: \text{deg}_{V_B}(v) > \text{deg}_{V_A}/\sqrt{n}\}$ and $V_A^{light}=V_A\backslash V_A^{heavy}$. As shown in Figure~\ref{fig:cases of diamond}, there are three possible cases.
\begin{figure*}[tbp]
    \centering
    \includegraphics[width=0.6\hsize]{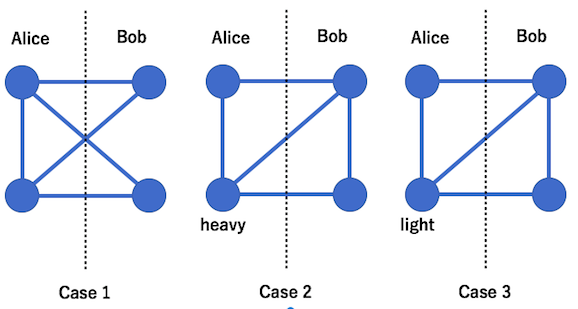}
    \caption{Three types of diamonds which have exactly two vertices in $V_A$.} 
    \label{fig:cases of diamond}
\end{figure*}

\begin{itemize}
    \item To list diamonds of case 1, we can use the protocol for listing cliques in \cite{czumaj2020detecting}. This requires $O(\sqrt{n}|E_{cut}|)$ bits.
    
    \item  To list diamonds of case 2, Alice sends edges $E_A \cap \{V_A^{heavy} \times V_A\}$ to Bob. This requires $O(\sqrt{n}|E_{cut}|)$ bits since the number of edges Alice sends to Bob is less than
    \begin{align*}
        \sum_{v\in V_A^{heavy}} \text{deg}_{V_A}(v) \leq
        \sum_{v\in V_A^{heavy}} \sqrt{n} \cdot\text{deg}_{V_B}(v)
        = \sqrt{n}|E_{cut}|.
    \end{align*}
    \item To list diamonds of case 3, for every $v\in V_A^{light}$, Bob sends edges $E_B\cap \{N_{V_B}(v) \times N_{V_B}(v)\}$ to Alice. This requires $O(\sqrt{n}|E_{cut}|)$ bits  since the number of edges Bob sends to Alice is less than
    \begin{align*}
    \sum_{v\in V_A^{light}} \left(\text{deg}_{V_B}(v)\right)^2
    \leq 
    \sum_{v\in V_A^{light}} \frac{\text{deg}_{V_A}(v)}{\sqrt{n}}\cdot\text{deg}_{V_B}(v)
    \leq 
    \sqrt{n}\sum_{v\in V_A^{light}}\text{deg}_{V_B}(v)
    \leq 
    \sqrt{n}|E_{cut}|.
    \end{align*}
\end{itemize}
\end{proof}

\setcounter{theorem}{5}
\begin{theorem}[Formal statement]\label{theo:diamond_protocol}
No family of lower bound graphs gives an $\tilde{\Omega}(n^{1/2+\varepsilon})$ lower bound of induced diamond listing for any $\varepsilon >0$.
\end{theorem}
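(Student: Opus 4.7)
The plan is to mirror the proof of Theorem~\ref{theorem_ck_limitation}, now using the $\tilde{O}(\sqrt{n}\,|E_{cut}|)$-bit two-party listing protocol for induced diamonds established in the preceding theorem, in place of the $\tilde{O}(n\,|E_{cut}|)$-bit protocol used there for cycles.

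Concretely, I would argue by contrapositive. Suppose some family
$\{G_{x,y}=(V_A\cup V_B,\,E_A\cup E_B\cup E_{cut})\}_{x,y\in\{0,1\}^K}$
of lower bound graphs with respect to a Boolean function $f:\{0,1\}^K\times\{0,1\}^K\to\{0,1\}$ and the predicate ``$G$ contains an induced diamond'' gave, via Theorem~\ref{theo:Family_of_Lower_Bound_Graphs}, an $\tilde{\Omega}(n^{1/2+\varepsilon})$ round lower bound for induced diamond listing. Then by the theorem's conclusion we would have $CC^R(f)/|E_{cut}|=\tilde{\Omega}(n^{1/2+\varepsilon})$. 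On the other hand, Alice (holding the $V_A$-side, determined by $x$ together with the fixed cut edges) and Bob (holding the $V_B$-side, determined by $y$) can evaluate $f(x,y)$ in the vertex partition model by first running the listing protocol of the preceding theorem, which produces lists $A_H,B_H$ such that every induced diamond of $G_{x,y}$ appears in $A_H\cup B_H$, and then having each player send one bit indicating whether their own list is non-empty. By Property~4 of Definition~\ref{def:LBGraph}, the resulting bit determines $f(x,y)$. The total communication is $\tilde{O}(\sqrt{n}\,|E_{cut}|)$, so $CC^R(f)=\tilde{O}(\sqrt{n}\,|E_{cut}|)$, equivalently $|E_{cut}|=\tilde{\Omega}(CC^R(f)/\sqrt{n})$. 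Substituting this back yields
\begin{align*}
\tilde{\Omega}(n^{1/2+\varepsilon}) \;=\; \frac{CC^R(f)}{|E_{cut}|} \;=\; \tilde{O}(\sqrt{n}),
\end{align*}
a contradiction for any fixed $\varepsilon>0$ and $n$ large enough.

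The algebraic step is essentially forced, so the real content sits in the preceding protocol theorem; for the statement above, the only subtlety to check is that the ``listing'' model in the sense preceding Theorem~\ref{theorem:cycle_protocol} is strong enough to decide existence of an induced diamond using only a constant number of additional bits at the end — which is immediate, since every induced diamond of $G_{x,y}$ is guaranteed to appear in at least one of the two players' output lists. With this in hand, the reduction from $f$ to induced diamond listing goes through and the claimed impossibility follows.
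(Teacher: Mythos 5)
Your proof is correct and takes the same route as the paper: the paper's own proof of this theorem is literally a one-line pointer to the proof of Theorem~\ref{theorem_ck_limitation}, and your argument is exactly that proof re-run with the $\tilde{O}(\sqrt{n}\,|E_{cut}|)$ diamond-listing protocol substituted for the $\tilde{O}(n\,|E_{cut}|)$ cycle-listing protocol, yielding $|E_{cut}| = \tilde{\Omega}(CC^R(f)/\sqrt{n})$ and hence an $\tilde{O}(\sqrt{n})$ ceiling on any derivable lower bound.
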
\setcounter{theorem}{18}
\begin{proof}
Exactly the same as how Theorem \ref{theorem_ck_limitation} was proved from Theorem \ref{theorem:cycle_protocol}.
\end{proof}

\section{Sublinear-round listing of induced diamonds (Theorem \ref{theorem_diamond_upper_bound})}\label{appD}
Assume that, for some edge subset $E'\subseteq E$ where $|E'|=c|E|$ for some constant $c>0$, all cliques that contain at least one edge from $E'$ are listed by a procedure $\mathcal{A}$.  We recursively apply $\mathcal{A}$ for $E\backslash E'$ since remaining cliques are the ones whose edges are in $E\backslash E'$. After $O(\log n)$ levels of recursion of $\mathcal{A}$, all cliques in the  original graph are listed since removing edges does not increase the number of cliques. Fastest (and optimal) clique listing algorithms~\cite{chang2019improved,censor2021tight} use this recursive method. On the other hand, this cannot be used for induced subgraphs since removing edges may increase the number of induced subgraphs (e.g., removing one edge from a $4$-clique creates an additional diamond).
Instead, we can use $K_4$ listing algorithm of \cite{eden2019sublinear} to list induced diamonds.
The algorithm begins by computing the decomposition of edge set, in which the edge set are decomposed into two subsets: edges that induce clusters with low mixing time and edges between clusters. 
The clusters satisfy the following:
\begin{definition}[$\delta$-cluster]For an $n$-node graph $G=(V,E)$ and a subgraph $G'=(V',E')$ of $G$, $G'$ is called a $\delta$-cluster if the following condition holds:
\begin{enumerate}
    \item Mixing time of $G'$ is $O(\mathrm{poly}\log{(n)})$,
    \item For any $v\in V'$, $\deg_{E'}(v)=\Omega(n^{\delta})$.
\end{enumerate}
\end{definition}

\begin{lemma}
[Expander Decomposition, Lemma 9 of \cite{eden2019sublinear}]\label{lemma_expander}
For a $n$-node CONGEST network $G=(V,E)$, we can find, \textit{w.h.p.}, in $\tilde{O}(n^{1-\delta})$ rounds, a decomposition of  $E$ to $E=E_m \cup E_s$satisfying the following conditions.
\begin{enumerate}
    \item $E_m$ is the union of at most $s = O(\log n)$ sets, $E_m=\displaystyle\bigcup_{i=1}^s{E_m^i}$, where each $E_m^i$ is the vertex-disjoint union of $O(n^{1-\delta})$ $\delta$-clusters, $C_i^1,\ldots,C_i^{k_i}$.
    
    The set $E_m^i$is called $i$-th level of the decomposition. We say that a node $u$ belongs to cluster $C_i^j$ if at least one of $u$'s edges is in $C_i^j$.
    \item Each level-$i$ cluster $C_i^j$has a unique identifier, which is a pair of the form$(i,x)$ where $x\in[n^{1-\delta}]$, and an unique leader node, which is some node in the cluster.
    Each node $u$ knows the identifier of all the clusters $C_i^j$ to which $u$ belongs, the leaders for those clusters, and it knows which of its edges belong to which clusters.    
    \item $E_s=\displaystyle\bigcup_{v\in V}{E_{s,v}}$, where $E_{s,v}$is a subset of edges incident to $v$ and $|E_{s,v}|\leq n^{\delta}\log n$. Each vertex $v$ knows $E_{s,v}$.
\end{enumerate}
\end{lemma}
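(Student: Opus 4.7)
The plan is to build the decomposition iteratively in $s = O(\log n)$ peeling phases, using a single-phase distributed expander-decomposition subroutine as the core primitive. At phase $i$ we maintain a residual edge set $R_i$ (with $R_1 = E$), invoke the subroutine on the graph $(V, R_i)$ to carve off a vertex-disjoint collection of $\delta$-clusters $C_i^1, \ldots, C_i^{k_i}$ whose edges form $E_m^i$, and let $R_{i+1}$ be the leftover edges. After $s$ phases the remaining edges become $E_s$, and $E_m = \bigcup_i E_m^i$.

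For the single-phase primitive I would adapt the distributed expander decomposition of Chang--Saranurak and its refinements: iterate the cut-matching game together with a ball-growing trimming step, so that each connected piece either produces a polylogarithmic-mixing-time cluster (certified by a short witness embedding, and retained only if every vertex has residual degree $\ge n^\delta$ inside the cluster) or is split along a reasonably balanced sparse cut. Vertices whose residual degree drops below $n^\delta$ are excluded from being clustered at this phase; their incident residual edges are handed to the next phase. Intra-cluster bookkeeping (electing a unique leader, assigning identifiers, and informing each vertex which of its edges lie inside the cluster) can then be performed in $\mathrm{poly}\log n$ rounds using the guaranteed polylog mixing time, and identifiers of the form $(i, x)$ with $x \in [n^{1-\delta}]$ are produced by distributed enumeration over the cluster leaders.

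The correctness of the iteration follows from a degree potential argument: any vertex $v$ still carrying at least $n^\delta$ residual edges at phase $i$ gets clustered with $\Omega(n^\delta)$ of those edges moved into $E_m^i$, so after $O(\log n)$ phases each vertex has residual degree at most $n^\delta \log n$, establishing the bound on $|E_{s,v}|$. The bound $k_i = O(n^{1-\delta})$ falls out of the minimum-degree condition: each level-$i$ cluster contains at least $n^\delta + 1$ vertices and the clusters at a single level are vertex-disjoint, so there can be at most $n / n^\delta = n^{1-\delta}$ of them.

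The main obstacle will be bounding the round complexity of the single-phase primitive by $\tilde{O}(n^{1-\delta})$. A naive cut-matching implementation would cost $\tilde{\Theta}(D + \sqrt{n})$ per iteration through global BFS, which is too slow; achieving the target requires exploiting (i) that the sparse cuts to be found at phase $i$ have total size only $\tilde{O}(n^{1-\delta})$, so they can be certified and routed via an embedded expander in $\tilde{O}(n^{1-\delta})$ rounds, and (ii) that once a cluster is formed, all further work inside it is parallelized across clusters using their $\mathrm{poly}\log n$ mixing time, in the spirit of the routing-on-embedded-expanders analyses of \cite{censor2021tight,censor2020distributed,chang2019improved}. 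Summing over the $O(\log n)$ phases then preserves the $\tilde{O}(n^{1-\delta})$ total round bound claimed in the lemma.
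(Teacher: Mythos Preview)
The paper does not prove this lemma: it is stated as ``Lemma 9 of \cite{eden2019sublinear}'' and is used as a black box imported from that prior work. There is therefore no ``paper's own proof'' to compare your proposal against.

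Your sketch is a reasonable high-level outline of how the result of \cite{eden2019sublinear} is obtained (iterated peeling, Chang--Saranurak-style cut-matching to certify polylog mixing, trimming of low-degree vertices, and a degree-drop potential to bound the number of phases), and you correctly identify the round-complexity analysis as the nontrivial part. But as written it remains a plan rather than a proof: the claim that ``any vertex still carrying at least $n^\delta$ residual edges at phase $i$ gets clustered with $\Omega(n^\delta)$ of those edges moved into $E_m^i$'' does not follow automatically from running a generic expander decomposition and requires the specific trimming/charging argument of \cite{eden2019sublinear}; and the $\tilde O(n^{1-\delta})$ round bound per phase needs the actual routing machinery from \cite{chang2019improved,eden2019sublinear}, not just the observation that sparse cuts are small. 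If your goal is simply to use the lemma as this paper does, citing \cite{eden2019sublinear} is the appropriate move; if your goal is to reprove it, you would need to fill in those two steps with the details from that reference.
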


The reason that the expander decomposition is used in the distributed subgraph detection is that $\delta$-clusters can simulate \clique style algorithms efficiently:

\begin{lemma}[Lemma 13 of \cite{eden2019sublinear}]\label{lemma_simulation}
For a constant $0<\varepsilon \leq 1$, suppose an edge set $E'$ is partitioned between the nodes of a $\delta$-cluster $C$, so that each node $u\in C$ initially knows a subset $E'_u$ of size at most $O(n^{2-\varepsilon})$. Then a simulation of $t$ rounds of the \clique algorithm on $G'=(V,E')$ can be performed in $\tilde{O}(n^{2-\delta-\varepsilon}+t\cdot n^{2-2\delta})$ rounds, with success probability $1-\frac{1}{n^2}$.
\end{lemma}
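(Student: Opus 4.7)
The plan is to have the $\delta$-cluster $C$ simulate \clique on the full vertex set $V$ by using expander routing inside $C$. Because $C$ is a $\delta$-cluster, its mixing time is $\mathrm{polylog}(n)$ and every node of $C$ has in-cluster degree $\Omega(n^\delta)$; in particular $|C|\ge n^\delta$. The key tool is a standard expander-routing primitive: on a graph of mixing time $\tau$ and minimum degree $d$, any routing task in which each node is the source of at most $m_s$ and the destination of at most $m_r$ messages of $O(\log n)$ bits can be solved in $\tilde{O}(\tau + (m_s+m_r)/d)$ rounds, failing with probability at most $n^{-c}$ for any desired constant $c$ (via $O(\log n)$ independent repetitions). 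With $\tau=\mathrm{polylog}(n)$ and $d=\Omega(n^\delta)$, this becomes $\tilde{O}((m_s+m_r)/n^\delta)$ rounds.

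Fix a balanced (e.g.\ uniformly random) assignment $\pi\colon V\to C$ so that every cluster node is the image of $O(n/|C|)=O(n^{1-\delta})$ vertices of $V$; the simulator $\pi(v)$ will play the role of $v\in V$ throughout the simulation and store $v$'s local state. For the initialization, each $u\in C$ holds $|E'_u|=O(n^{2-\varepsilon})$ edges and sends, for each $(a,b)\in E'_u$, a message to $\pi(a)$ and to $\pi(b)$. The send-load is thus $O(n^{2-\varepsilon})$ per node, and standard load-balancing arguments for the random $\pi$ give a matching bound $\tilde{O}(n^{2-\varepsilon})$ on the receive-load per node. By the routing primitive, this phase completes in $\tilde{O}(n^{2-\varepsilon}/n^\delta)=\tilde{O}(n^{2-\delta-\varepsilon})$ rounds, yielding the first term of the bound.

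To simulate one \clique round on $G'$, each simulator $\pi(v)$ must deliver an $O(\log n)$-bit message to every other simulator. Because each cluster node hosts $O(n^{1-\delta})$ simulators, both its send-load and receive-load are $O(n\cdot n^{1-\delta})=O(n^{2-\delta})$. The routing primitive executes a single simulated round in $\tilde{O}(n^{2-\delta}/n^\delta)=\tilde{O}(n^{2-2\delta})$ cluster rounds, so $t$ rounds cost $\tilde{O}(t\cdot n^{2-2\delta})$, which is the second term. Summing the initialization and the $t$ per-round simulations gives the claimed bound.

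The main obstacle is establishing the routing primitive with per-node throughput $\tilde{\Theta}(n^\delta)$; this is exactly where both defining properties of a $\delta$-cluster enter, namely the polylog mixing time (so that parallel short random walks implement permutation routing in $\mathrm{polylog}(n)$ rounds per unit of load) and the minimum degree $\Omega(n^\delta)$ (for the raw per-node bandwidth). A secondary technical issue is ensuring the receive-load bound in the initialization when some $v\in V$ has very large degree in $G'$; this is handled by spreading $v$'s incoming edges across a small team of $\tilde{O}(1)$ cluster nodes acting as delegates of $\pi(v)$ and aggregating the result to $\pi(v)$ at no asymptotic extra cost. Union-bounding the failure probability over the $O(t)$ routing invocations keeps the total error below $1/n^2$.
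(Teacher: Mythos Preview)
The paper does not give its own proof of this lemma; it is quoted verbatim as ``Lemma 13 of \cite{eden2019sublinear}'' and used as a black box in Appendix~\ref{appD}. So there is nothing in the present paper to compare your argument against.

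That said, your sketch is precisely the standard argument underlying the cited result: assign the $n$ virtual \clique nodes to the $\Omega(n^\delta)$ physical cluster nodes in a balanced way, use expander routing (which works because a $\delta$-cluster has polylogarithmic mixing time and minimum degree $\Omega(n^\delta)$) to deliver the initial $O(n^{2-\varepsilon})$-sized edge lists and then each of the $t$ all-to-all rounds, and read off the two terms $\tilde{O}(n^{2-\delta-\varepsilon})$ and $\tilde{O}(t\cdot n^{2-2\delta})$ from the per-node load divided by $n^\delta$. This is the approach of Eden, Fiat, Fischer, Kuhn, and Oshman, so your reconstruction is on target.

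One point worth tightening: your claim that ``standard load-balancing arguments for the random $\pi$ give a matching bound $\tilde{O}(n^{2-\varepsilon})$ on the receive-load'' is not automatic. A single vertex $v\in V$ can have degree $\Theta(n)$ in $G'$, and since a cluster node hosts $O(n^{1-\delta})$ virtual vertices, naive random $\pi$ could concentrate $\Theta(n^{2-\delta})$ incoming edges at one physical node. You correctly flag this and propose a delegate trick, but the actual fix in \cite{eden2019sublinear} (and in the Chang--Saranurak/Ghaffari--Kuhn--Su routing primitives it builds on) is baked into the routing lemma itself, which only requires source-side load bounds and handles destination balancing internally via random intermediate destinations. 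Either way your high-level plan is sound; just be aware that the per-node receive bound is not a consequence of random $\pi$ alone.
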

Roughly speaking, a $\delta$-cluster can simulate 1 round of a \clique style algorithm in $\tilde{O}(n^{2-2\delta})$ rounds. 

After doing the decomposition $E=E_m\cup E_s$, the high-level approach of $K_4$ listing algorithm of \cite{eden2019sublinear} is as follows:

\begin{enumerate}
    \item To list $K_4$ in $E_s$, we can use the trivial algorithm since the subgraph induced by $E_s$ is sparse.
    \item To list $K_4$ which contains at least one edge in $E_m$, each cluster $C$ gathers outside edges which are incident to a cluster node in $\tilde{o}(n)$ rounds so that the number of edges gathered by each node of $C$ is $\tilde{o}(n^2)$.
All nodes that are outside of $C$ with many neighbors in $C$ send all edges incident to them. This can be done efficiently since they have enough communication bandwidth to $C$.
Then, $C$ simulates $K_4$ listing algorithm of the \clique model.
    \item Note that each outside node $u$ with small bandwidth to $C$ have no ability to send all edges incident to it in $\tilde{o}(n)$ rounds. However, $u$ can quickly gather all cluster edges that would belong to some $K_4$ which contains $u$ since the number of cluster neighbors of $u$ is small.
\end{enumerate}

The difference is that in the case of diamonds, there are several types of diamonds which are listed in step 2 and step 3 of above algorithm (see Figure~\ref{fig:diamond_algorithm}). For a cluster $C$, we call a node $v$ is $C$-$heavy$ when $v$ does not belong to $C$ and has more than $n^{\varepsilon}$ neighbors in $C$.
The algorithm for listing induced diamonds is as follows:

\begin{enumerate}
    \item First, we run the expander decomposition of Lemma~\ref{lemma_expander} in $\tilde{O}(n^{1-\delta})$ rounds.
    \item To list diamonds in $E_s$, each node $u$ sends the edges of $E_s$ that are incident on $u$ to all neighbors. This requires $\tilde{O}(n^{\delta})$ rounds.
    \item Each $C$-$heavy$ node $v$ sends $N(v)$ to $C$ in $O(n^{1-\varepsilon})$ rounds by partitioning $N(v)$ into $|N(v)\cap C|=\Omega(n^{\varepsilon})$ subsets of size $|N(v)|/|N(v)\cap C|=O(n^{1-\varepsilon})$ and sending each subset to a different neighbor in $C$. Then, since each $C$-node $u$ receive at most $n\cdot n^{1-\varepsilon}=n^{2-\varepsilon}$ edges, we can list all induced diamonds which  contains at least one edge from some cluster $C$, and contains a $C$-$heavy$ node, in $\tilde{O}(n^{2-\delta-\varepsilon}+\sqrt{n} \cdot n^{2-2\delta})$ rounds by the algorithm of Lemma \ref{lemma_simulation}.
    \item Each $C$-$light$ node $u$ sends $N(u)\cap C$ to all its neighbors in $O(n^{\varepsilon})$ rounds. Then, each node $v$ received $N(u)\cap C$ from $u$ compute the following set locally:
    \begin{align*}
    \mathcal{L}_v^1 = &\left\{
        \{u,c_1,c_2\}\middle| u\in N(v)\text{ is $C$-$light$},c_1,c_2\in C,\right.c_1 \in N(v)\cap N(u),c_2 \in N(u)\backslash N(v)
    \left. \right\},\\
    \mathcal{L}_v^2 = &\left\{
        \{u,c_1,c_2\}\middle| u\notin N(v)\text{ is $C$-$light$},c_1,c_2\in C,\right. c_1,c_2 \in N(v)\cap N(u),\\ &(u,c_1), (u,c_2), (v,c_1),\text{ and } (v,c_2) \in E_s.
    \left. \right\},
    \end{align*}
    where $\mathcal{L}_v^1$ and $\mathcal{L}_v^2$ correspond to the rightmost and leftmost diamonds in Figure~\ref{fig:diamond_algorithm}, respectively. Note that we do not have to care about the leftmost diamond of Figure~\ref{fig:diamond_algorithm} which contains an edge from another cluster $C'$: Among the leftmost diamonds $\{u,v,c_1,c_2\}$ of Figure~\ref{fig:diamond_algorithm}, where $c_1$, $c_2$ belong to $C$ and $u,v$ are $C$-light nodes, we only need to enumerate the diamonds whose four edges $(u,c_1)$, $(u,c_2)$, $(v,c_1)$ and $(v,c_2)$ are $E_s$-edges. This is because, for instance, if $(u,c_1)$ is $C'$-edge for some cluster $C'$, then this diamond is treated by the cluster $C'$ as the middle or rightmost diamond in Figure~\ref{fig:diamond_algorithm}. Since each node sends its $E_s$-edges to all its neighbors in step 2, $v$ knows the four edges $(u,c_1)$, $(u,c_2)$, $(v,c_1)$ and $(v,c_2)$ even when $u\notin N(v)$.
    Then, $v$ construct the following list of edge queries locally:
    \begin{align*}
    \mathcal{Q}_{v,c_1} = \left\{
    \{c_1,c_2\} \middle| \exists u\in N(v):\{u,c_1,c_2\}\in \mathcal{L}^1_v
    \text{ or }\exists u\notin N(v):\{u,c_1,c_2\}\in \mathcal{L}^2_v
    \right\}.
    \end{align*}
    We have $|\mathcal{Q}_{v,c_1}|=O(n^{\varepsilon})$: if $c_1$ received $\{c_1,c_2\}\in \mathcal{Q}_{v,c_1}$, then $c_2\in N(v)\cap C$ by definition. On the other hand, since $v$ is $C$-$light$, $|N(v)\cap C|=O(n^{\varepsilon})$ holds. Each node $v$ sends $\mathcal{Q}_{v,c_1}$ to $c_1$, and $c_1$ responds with $\mathcal{Q}_{v,c_1}\cap (\{c_1\}\times N(c_1))$ in $O(n^{\varepsilon})$ rounds. Therefore, in $O(n^{\varepsilon})$ rounds, we can list all induced diamonds which contains at least one edge from some cluster $C$, and does not contain $C$-$heavy$ nodes.
\end{enumerate}

Taking parameters $\delta = 5/6$ and $\varepsilon = 1/2$ (same as in the algorithm of \cite{eden2019sublinear}), we get the following theorem.
\begin{figure*}[tbp]
    \centering
    \includegraphics[width=0.8\hsize]{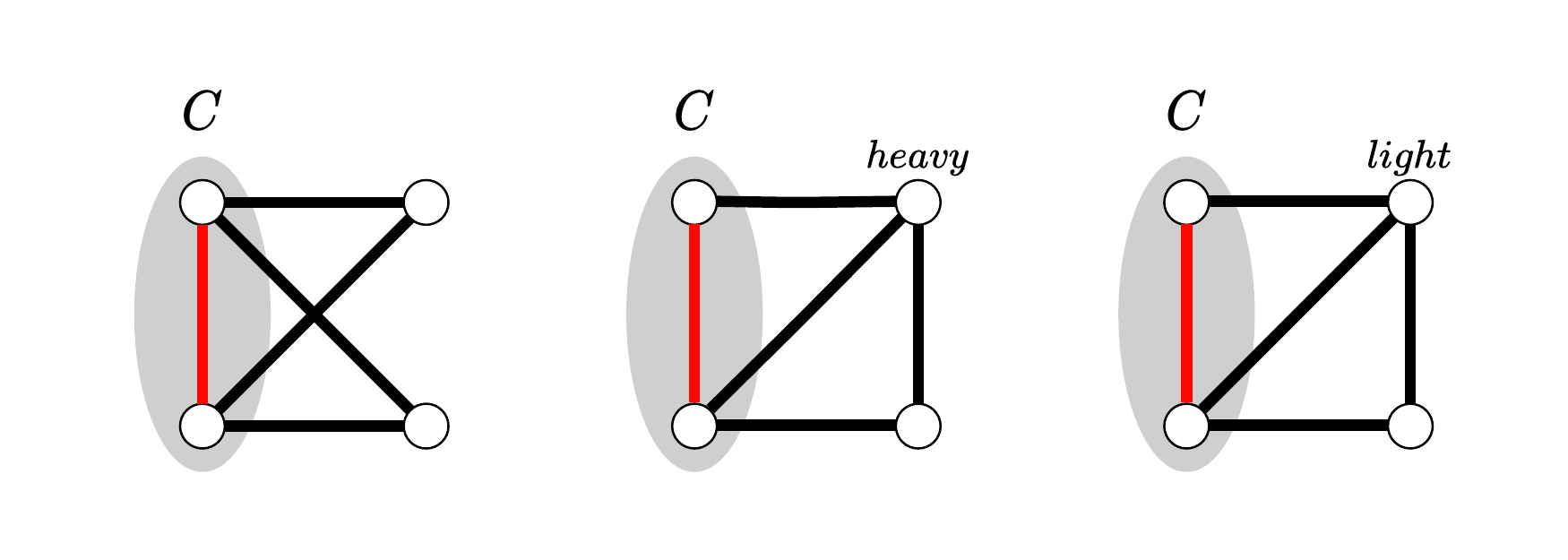}
    \caption{Three types of diamonds that contain at least one edge from $E_m$. Red edges in the figure correspond to $E_m$-edges, i.e., edges belong to some cluster $C$. Heavy node represents a $C$-$heavy$ node, i.e., a node which does not belong to $C$, but has more than $n^{\varepsilon}$ neighbors in $C$. Light node represents a $C$-$light$ node.} 
    \label{fig:diamond_algorithm}
\end{figure*}

\setcounter{theorem}{6}
\begin{theorem}\label{theo:diamond listing upper bound}
Listing all induced diamonds can be done in $\tilde{O}(n^{5/6})$ rounds with high probability in the \congest model.
\end{theorem}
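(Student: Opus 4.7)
The plan is to adapt the $K_4$-listing strategy of \cite{eden2019sublinear} to the induced setting via an expander decomposition. First I would run the decomposition of Lemma~\ref{lemma_expander} with $\delta=5/6$, partitioning the edges into $E=E_m\cup E_s$ in $\tilde{O}(n^{1/6})$ rounds, where $E_m$ is the union of $O(\log n)$ levels of vertex-disjoint $\delta$-clusters of polylogarithmic mixing time, and every vertex is incident to at most $\tilde{O}(n^{5/6})$ edges of $E_s$. Induced diamonds all of whose edges lie in $E_s$ can then be listed in $\tilde{O}(n^{5/6})$ rounds by having each node broadcast its $E_s$-edges to all of its neighbors, after which the whole five-edge, one-non-edge pattern can be recognized locally (since a diamond has diameter two).

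For induced diamonds containing at least one edge of a fixed $\delta$-cluster $C$, the plan is to split external vertices $v\notin C$ into $C$-heavy ($|N(v)\cap C|>n^{1/2}$) and $C$-light, and handle the two cases differently. In the heavy case, each $C$-heavy vertex $v$ exploits its $\Omega(n^{1/2})$ cut-edges into $C$ to stream its entire neighborhood $N(v)$ into the cluster in $\tilde{O}(n^{1/2})$ rounds; this leaves every cluster node holding $\tilde{O}(n^{3/2})=\tilde{O}(n^{2-1/2})$ edges of relevant information, and by the \clique simulation result (Lemma~\ref{lemma_simulation}) the cluster can run a $t$-round \clique-style induced-diamond detection in $\tilde{O}(n^{2-\delta-\varepsilon}+t\cdot n^{2-2\delta})=\tilde{O}(n^{2/3}+t\cdot n^{1/3})$ rounds, which is $\tilde{O}(n^{5/6})$ for $t=\tilde{O}(n^{1/2})$.

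The main obstacle is the $C$-light case, where no external vertex can afford to forward its full neighborhood. The plan is to have each $C$-light vertex $u$ broadcast only its cluster-footprint $N(u)\cap C$ (of size at most $n^{1/2}$) to each of its neighbors in $\tilde{O}(n^{1/2})$ rounds. Each vertex $v$ then assembles locally the short list of candidate triples $(u,c_1,c_2)$ that could together with $v$ form an induced diamond intersecting $C$, and sends at most $O(n^{1/2})$ queries per cluster leader to verify both the presence of the required cluster edge and the absence of the forbidden edge; the latter is possible because any missing edge between an external vertex and a cluster vertex is an $E_s$-edge already known to $v$ from the preprocessing, so the induced constraint can be checked without extra communication.

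A subtle point intrinsic to the induced setting, and the reason a direct recursion on $E\setminus E_m$ (as in the clique-listing algorithms of \cite{chang2019improved,censor2021tight}) fails, is that removing edges can turn a $K_4$ into an induced diamond and thereby create patterns that did not exist before; hence every candidate diamond must be enumerated rather than eliminated by peeling. The combinatorial bookkeeping needed to avoid double-counting across clusters, namely charging each cross-cluster induced diamond to a unique cluster (for instance, the cluster containing its cluster edges, with its remaining external edges required to be $E_s$-edges), is where the induced constraint genuinely bites. Balancing the three subroutines above with the choices $\delta=5/6$ and $\varepsilon=1/2$ yields the claimed $\tilde{O}(n^{5/6})$ round complexity.
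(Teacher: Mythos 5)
Your proposal follows essentially the same route as the paper: expander decomposition with $\delta=5/6$ and $\varepsilon=1/2$, trivial handling of the $E_s$-only diamonds, streaming neighborhoods of $C$-heavy external nodes into $C$ and simulating a \clique-style search via Lemma~\ref{lemma_simulation}, broadcasting cluster-footprints of $C$-light nodes and issuing $O(n^{\varepsilon})$ edge-queries into the cluster, and charging each cross-cluster diamond to a unique cluster by insisting the remaining cut edges be $E_s$-edges. One small imprecision worth fixing: you justify checking the forbidden non-edge by claiming that ``any missing edge between an external vertex and a cluster vertex is an $E_s$-edge,'' but a non-edge is not an $E_s$-edge; in the paper's construction the forbidden non-edge of every light-case diamond is always incident to the querying vertex $v$ (it is $(v,c_2)$ for $\mathcal{L}_v^1$ and $(u,v)$ for $\mathcal{L}_v^2$), so $v$ verifies it from its own neighborhood at no communication cost, while the $E_s$-edge requirement serves only the deduplication/charging argument you correctly identify in your last paragraph, and the queries to $c_1$ verify only the presence of the intra-cluster edge $(c_1,c_2)$.
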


\end{document}